\newtheorem{definition}{Definition}
\newtheorem{remark}{Remark}
\newtheorem{lemma}{Lemma}
\newtheorem{corollary}{Corollary}
\newtheorem{theorem}{\textbf{Theorem}}
\newtheorem{assumption}{Assumption}
\begin{document}
\title{Covariance-Intersection-based Distributed Kalman Filtering: Stability Problems Revisited}

\author{Zhongyao~Hu, Bo Chen, Chao Sun, Li Yu
\thanks{Manuscript received xx; accepted xx. This paper was recommended by Associate Editor xx. {\em (Corresponding author: Bo Chen.)}}
\thanks{Z. Hu, B. Chen, J. Wang, C. Sun and L. Yu are with Department of Automation, Zhejiang University of Technology, Hangzhou 310023, China. (email: bchen@aliyun.com).}
}
\markboth{Preprint}
{Shell \MakeLowercase{\textit{et al.}}: Bare Demo of IEEEtran.cls for Journals}
\maketitle

\begin{abstract}
This paper studies the stability of covariance-intersection (CI)-based distributed Kalman filtering in time-varying systems. For the general time-varying case, a relationship between the error covariance and the observability Gramian is established. Utilizing this relationship, we demonstrate an intuition that the stability of a node is only related to the observability of those nodes that can reach it uniformly. For the periodic time-varying case, it is proved by a monotonicity analysis method that CI-based distributed Kalman filtering converges periodically for any initial condition. The convergent point is shown to be the unique positive definite solution to a Riccati-like equation. Additionally, by constructing an intermediate difference equation, the closed-loop transition matrix of the estimation error system is proved to be Schur stable. Notably, all theoretical results are obtained without requiring network connectivity assumptions. Finally, simulations verify the effectiveness of the stability results.
\end{abstract}

\begin{IEEEkeywords}
Distributed estimation, Kalman filtering, stability analysis, covariance intersection (CI).
\end{IEEEkeywords}

\vspace{-3pt}
\section{Introduction}
To effectively control and make decisions regarding the objective system, it is essential to estimate the system state. Centralized estimation methods, such as Kalman filtering, have received a lot of attention in the past, where they integrate all sensor data into a fusion center for processing \cite{Optimalfiltering}. Nevertheless, cyber-physical systems with large spatial spans, such as power grids, are often monitored by networks with a large number of local sensors \cite{HUANG20221}. In such case, the centralized manner is no longer applicable. Differently, distributed manners allocate calculation and communication tasks to the whole sensor network, thereby exhibiting better robustness and reliability \cite{REGO201936}. Consensus-based methods are one of the earliest distributed estimation methods \cite{Olfati5399678,MATEI20121776,Battistelli6897960,ZHANG201934}. These methods build a local filter in each node, and then guarantees that all nodes reach a consensus on the system state through sufficient number of consensus iterations. Based on a distributed observability decomposition technique, \cite{Mitra8270595} and \cite{Mitra9627574} proposed a distributed observer by explicitly characterizing the state component contained in each node. Moreover, \cite{Yan9778959} proposed a distributed estimator by reformulating the centralized steady-state Kalman filtering as a linear combination of multiple subsystems. However, this method requires that all measurements be scalar. Particularly, an inherent challenge in the distributed estimation filed is the difficulty in accurately obtaining correlations among nodes. To address this issue, \cite{BATTISTELLI2014707} employed the covariance intersection (CI) fusion rule \cite{Julier609105}, leading to the development of CI-based distributed Kalman filtering. Specifically, CI-based distributed Kalman filtering constructs a Kalman filter at each node and fuses information from neighboring nodes using the CI fusion rule. Due to the utilization of the CI fusion rule, this method not only ensures the consistency of the estimate, but also is simple and efficient. 

Stability is a fundamental property that every distributed estimator must guarantee. Considerable research has been devoted to the stability of CI-based distributed Kalman filtering \cite{BATTISTELLI2014707,BATTISTELLI2016169,Liu8113573,CHEN2021109769,Li8845692,Wei8423433,Qian9599480,Qian10423131}. In \cite{BATTISTELLI2014707}, CI-based distributed Kalman filtering was proved to be stable if the system is collectively observable and the communication graph is strongly connected. By defining the observability of nonlinear systems via Jacobi matrices, \cite{BATTISTELLI2016169} generalized the stability result in \cite{BATTISTELLI2014707} to the nonlinear case. In \cite{Liu8113573} and \cite{CHEN2021109769}, the stability of CI-based distributed Kalman filtering in the presence of packet loss was studied. \cite{Li8845692} and \cite{Wei8423433} proved that CI-based distributed Kalman filtering is stable when the system is collectively detectable. Additionally, \cite{Qian9599480} showed that even if the uniform observability only holds when the system dynamics decays slowly, CI-based distributed Kalman filtering remains stable. \cite{Qian10423131} proved that CI-based distributed Kalman filtering converges in linear time-invariant systems by analyzing a coupled Riccati equation. Nevertheless, although there have been numerous works on the stability of the CI-based distributed Kalman filtering, several refinements are warranted: 1) In practice, the communication graph is often time-varying due to the energy-saving needs and sampling asynchrony of the nodes. However, only time-invariant graphs were discussed by \cite{BATTISTELLI2014707} and \cite{BATTISTELLI2016169,Liu8113573,CHEN2021109769,Li8845692,Wei8423433,Qian9599480,Qian10423131}. 2) Sampling asynchrony often causes the system to be periodically varying \cite{Fujimoto7587933,Qian10167860}. Notably, centralized Kalman filtering converges in periodic systems; however, the convergence of the CI-based distributed Kalman filtering still remains open in periodic systems.

Based on the discussion above, the contributions of this paper are summarized as follows:
\begin{itemize}
\item For general time-varying cases, it is demonstrated for the first time, that even if the network topology is not connected, the estimation error of one node is mean-square bounded, provided that it is uniformly observable with the assistance of those nodes that can uniformly reach it.
\item For periodic time-varying cases, by analyzing the monotonicity and boundedness of a Riccati-like difference equation, we prove that CI-based distributed Kalman filtering converges periodically independent of initial conditions.
\end{itemize}

\textbf{Notations:} ${\mathbb{R}}^r$ and ${\mathbb{R}}^{r\times s}$ denote the $r$ dimensional and $r\times s$ dimensional Euclidean spaces, respectively. $\oplus$ and $\otimes$ respectively stand for the direct sum and Kronecker product of matrices. $I$ stands for an identity matrix. $\mathrm{Tr}(\cdot)$, $\mathrm{Det}(\cdot)$ and $\|\cdot\|$ represent the trace, determinant and induced norm of a matrix, respectively. For $X,Y\in{\mathbb{R}}^{r\times r}$, $X>Y$ and $X\geq Y$ mean that $X-Y$ is symmetric positive definite and positive semi-definite, respectively. $|\cdot|$ denotes the cardinality of a finite set. $\mathrm{E}[\cdot]$ denotes the mathematical expectation. The composite of operators are denoted as $\mathcal{T}_1\circ \mathcal{T}_2(\cdot)\triangleq\mathcal{T}_1(\mathcal{T}_2(\cdot))$, and $\mathcal{T}_1^n(\cdot)\triangleq\mathcal{T}_1\circ\mathcal{T}_1\circ\cdots\circ \mathcal{T}_1(\cdot)$ ($n$ times).

\section{Problem Formulation}
\subsection{System description}
Consider the time-varying plant
\begin{equation}
x(k)=A(k-1)x(k-1)+w(k-1),
\end{equation}
where $x(k)\in\mathbb{R}^n$ is the system state, $\{w(k)\}_{k\geq0}$ is a Gaussian white noise process with covariance $Q(k)>0$, the initial state $x(0)$ is Gaussian with mean $x(0|0)$ and covariance $P(0|0)\geq0$. The plant (1) is observed by a network with $N$ sensor nodes, that is,
\begin{equation}
z_i(k)=C_i(k)x(k)+v_i(k),\ i=1,\cdots,N,
\end{equation}
where $z_i(k)\in\mathbb{R}^{m_i}$ is the measurement of node $i$, $\{v_i(k)\}_{k\geq1}$ is a Gaussian white noise process with covariance $R_i(k)>0$, and $\{v_i(k)\}_{k\geq1}$, $\{v_j(k)\}_{k\geq1}$, $\{w(k)\}_{k\geq0}$, and $x(0)$ are mutually uncorrelated ($i\neq j$). Nodes can communicate with each other, and their communication can be represented as a time-varying directed graph $\mathbb{G}(k)=(\mathbb{N},\mathbb{E}(k))$, where $\mathbb{N}=\{1,2,\cdots,N\}$ is the set of nodes and $\mathbb{E}(k)\subset\mathbb{N}\times \mathbb{N}$ denotes the set of edges among nodes at $k$th moment. An edge $(i,j)\in\mathbb{E}(k)$ indicates that node $i$ can send information to node $j$ at the $k$th moment (self-loop is permitted). Moreover, define in-neighbors and out-neighbors of node $i$ as $\mathbb{I}_i(k)\triangleq\{j|(j,i)\in\mathbb{E}(k)\}$ and $\mathbb{O}_i(k)\triangleq\{j|(i,j)\in\mathbb{E}(k)\}$, respectively.

To facilitate the subsequent analysis, it is necessary to introduce some concepts about the graph. For a time-invariant graph $\mathbb{G}=(\mathbb{N},\mathbb{E})$, if there is a path of edges $(i_0,i_1)$, $(i_1,i_2)$, $\cdots$, $(i_{s-1}, i_s)$ from node $i$ to another node $j$, then node $j$ is said to be reachable from node $i$, where $i_0=i$, $i_s=j$, and $(i_\iota,i_{\iota+1})\in\mathbb{E}$. The graph $\mathbb{G}$ is strongly connected if every node is reachable from every other node. For a time-varying graph $\mathbb{G}(k)$, it is jointly strongly connected if there exists $0\leq\nu<\infty$ such that $\bigcup^{k+\nu}_{i=k}\mathbb{G}(i)$ is strongly connected for all $k\geq0$. Meanwhile, if there exists $0\leq \mu<\infty$ such that node $i$ is reachable from node $j$ on the graph $\bigcup^{k+\mu}_{\iota=k}\mathbb{G}(\iota)$ for all $k\geq0$, then node $i$ is jointly reachable from node $j$. Moreover, we define $\mathbb{R}_i\triangleq\{j|\text{node $i$ is jointly reachable from $j$ on $\mathbb{G}(k)$}\}$ as the set of node that is jointly reachable to node $i$.

\subsection{CI-based distributed Kalman filtering}
The steps for CI-based distributed Kalman filtering are shown below.

Step 0 (Initialization). For each $i\in\mathbb{N}$, node $i$ inputs an initial estimate pair $(\hat{x}_i(0|0),\hat{P}_i(0|0))$, which should satisfy
\begin{equation}\begin{aligned}
&\mathrm{E}[\hat{x}_i(0|0)]=\mathrm{E}[x(0)],\\ &\hat{P}_i(0|0)\geq\mathrm{E}\big[\big(x(0)-\hat{x}_i(0|0)\big)\big(x(0)-\hat{x}_i(0|0)\big)^\mathrm{T}\big].
\end{aligned}\end{equation}

Step 1 (Local Estimation).
For each $i\in\mathbb{N}$, node $i$ calculates a local estimate pair $(x_{i}(k|k),P_{i}(k|k))$ utilizing Kalman filtering
\begin{equation}\begin{aligned}
x_{i}(k|k-1)=A(k-1)\hat{x}_{i}(k-1|k-1),
\end{aligned}\end{equation}
\begin{equation}\begin{aligned}
P_{i}(k|k-1)=&A(k-1)\hat{P}_{i}(k-1|k-1)A(k-1)^\mathrm{T}\\
&+Q(k-1),
\end{aligned}\end{equation}
\begin{equation}\begin{aligned}
K_{i}(k)=&P_{i}(k|k-1)C_i(k)^\mathrm{T}\\
&\times \big(C_i(k)P_{i}(k|k-1)C_i(k)^T+R_i(k)\big)^{-1},
\end{aligned}\end{equation}
\begin{equation}\begin{aligned}
x_{i}(k|k)=x_{i}(k|k-1)+K_{i}(k)\big(z_i(k)-z_i(k|k-1)\big),
\end{aligned}\end{equation}
\begin{equation}\begin{aligned}
P_{i}(k|k)=&P_{i}(k|k-1)-P_{i}(k|k-1)C_i(k)^\mathrm{T}\\
&\times\big(C_i(k)P_{i}(k|k-1)C_i(k)^\mathrm{T}+R_i(k)\big)^{-1}\\
&\times C_i(k)P_{i}(k|k-1).
\end{aligned}\end{equation}

Step 2 (Information Exchange). For each $i\in\mathbb{N}$, node $i$ receives in-neighbors' local estimate pairs $(x_{j}(k|k),P_{j}(k|k))$, $j\in\mathbb{I}_i(k)$.

Step 3 (Information Fusion). For each $i\in\mathbb{N}$, node $i$ combines the received estimate pairs by the CI fusion rule
\begin{equation}
\hat{x}_{i}(k|k)=\sum_{j\in\mathbb{N}}\pi_{ji}(k)\hat{P}_i(k|k)P_{j}(k|k)^{-1}x_{j}(k|k),
\label {eq:6}
\end{equation}
\begin{equation}\begin{aligned}
\hat{P}_i(k|k)=\big(\sum_{j\in\mathbb{N}}\pi_{ji}(k)P_{j}(k|k)^{-1}\big)^{-1},
\label {eq:7}
\end{aligned}\end{equation}
where $\pi_{ji}(k)\geq0$, $\sum_{j\in\mathbb{I}_i(k)}\pi_{ji}(k)=1$ and $\pi_{ji}(k)=0$ for $j\notin\mathbb{I}_i(k)$. Let $k+1\to k$ and return to Step 1.

\begin{remark}
Benefiting from the consistency of the CI fusion rule, the CI-based distributed Kalman filtering also ensures that the estimates of all nodes are consistent \cite{BATTISTELLI2014707}, i.e.,
\begin{equation}\begin{aligned}
&P_{i}(k|k)\geq\mathrm{E}\big[\big(x(k)-x_{i}(k|k)\big)\big(x(k)-x_{i}(k|k)\big)^\mathrm{T}\big],
\end{aligned}\end{equation}
\begin{equation}\begin{aligned}
&\hat{P}_i(k|k)\geq\mathrm{E}\big[\big(x(k)-\hat{x}_i(k|k)\big)\big(x(k)-\hat{x}_i(k|k)\big)^\mathrm{T}\big],
\end{aligned}\end{equation}
for $i\in\mathbb{N}$ and $k\geq0$.
\end{remark}

\subsection{Problems of interest}
Based on the discuss above, the problems to be solved are summarized as follows:
\begin{itemize}
  \item Analyze the stability of CI-based distributed Kalman filtering on the general time-varying system.
  \item Analyze the convergence of CI-based distributed Kalman filtering if the time-varying behavior is periodic.
\end{itemize}

\begin{remark}
This paper differs from the existing stability results on CI-based distributed Kalman filtering in the following aspects: i) \cite{BATTISTELLI2014707} and \cite{Li8845692,Wei8423433,Qian9599480,Qian10423131} only considered the time-invariant graph, whereas this paper addresses the time-varying graph. ii) \cite{BATTISTELLI2014707} and \cite{Li8845692,Wei8423433,Qian9599480,Qian10423131} assumed that the graph is strongly connected, whereas this paper does not require this assumption and proves that the stability of each node is only related to the nodes that can uniformly reach it (c.f. Theorem 1). iii) \cite{Qian10423131} proved by analyzing a coupled Riccati equation that CI-based distributed Kalman filtering converges in time-invariant systems, while this paper proves the convergence in the periodic time-varying systems using a completely different methodology (c.f. Theorems 2 and 3).
\end{remark}
\section{The General Time-Varying Case}
The following definitions and assumption are needed for our stability analysis.

\begin{definition}
For a time-varying matrix sequence $\{X(k)\}_{k\geq0}$, define
\begin{equation}
\Phi_X(i+j,i)\triangleq
\left\{ \begin{array}{l}
X(i+j-1)\cdots X(i),\ \mathrm{if}\ j>0,\\
I,\ \mathrm{if}\ j=0,\\
X(i+j)^{-1}\cdots X(i-1)^{-1},\ \mathrm{if}\ j<0.
\end{array} \right.\nonumber
\end{equation}
\end{definition}

\begin{definition}[\cite{rugh1996linear}]
The time-varying matrix pair $(Y(k),X(k))$ is uniformly observable if there exist finite constants $o>0$ and $\mu>0$ such that
\begin{equation}\begin{aligned}
\sum^{\mu-1}_{i=0}\Phi_X(i+k,k)^\mathrm{T}&Y(i+k)^\mathrm{T}Y(i+k)\\
&\times\Phi_X(i+k,k)\geq oI,\ \forall k\geq0.\nonumber
\end{aligned}\end{equation}
\end{definition}

\begin{assumption}
There exist finite constants $a_l>0$, $a_u>0$, $q_l>0$, $q_u>0$, $c_{u,i}>0$, $r_{l,i}>0$, $r_{u,i}>0$ and $\pi_{l,ji}>0$ such that
\begin{equation}\begin{aligned}
\left\{ \begin{array}{l}
a_lI\leq A(k)A(k)^\mathrm{T}\leq a_uI,\ q_lI\leq Q(k)\leq q_uI\\
C_i(k+1)C_i(k+1)^\mathrm{T}\leq c_{u,i}I\\
r_{i,l}I\leq R_i(k+1)\leq r_{i,u}I\\
\pi_{l,ji}\leq \pi_{ji}(k),\ \text{if}\ j\in\mathbb{I}_{i}(k)
\end{array} \right.\nonumber
\end{aligned}\end{equation}
for $k\geq 0$ and $i\in\mathbb{N}$.
\end{assumption}

\subsection{Stability Analysis}
The following theorem will provide a fixed-time stability result for CI-based distributed Kalman filtering. The proof of the theorem is roughly divided into two stages. In the first stage we establish a relationship between the upper bound of the covariance matrix and the observability Grammian. In the second stage, we derive that the observability Grammian of each node consists of those nodes that are uniformly reachable to it. Under this case, we can conclude that the stability of each node is only related to those uniformly reachable nodes and remains independent of the network's connectivity.

\begin{theorem}
Consider the plant (1), the sensor network (2), and CI-based distributed Kalman filtering (4)-(10). Let Assumption 1 hold. Define
\begin{equation}\begin{aligned}
\hat{P}(k|k)\triangleq \hat{P}_1(k|k)\oplus \hat{P}_2(k|k)\oplus\cdots\oplus \hat{P}_N(k|k),
\end{aligned}\end{equation}
\begin{equation}\begin{aligned}
&[C(k)]_{\mathbb{R}_i}\triangleq[C_{i_1}(k);C_{i_2}(k);\cdots;C_{i_{|\mathbb{R}_i|}}(k)],
\end{aligned}\end{equation}
\begin{equation}\begin{aligned}
\{i_1,i_2,\cdots,i_{|\mathbb{R}_i|}\}=\mathbb{R}_i.
\end{aligned}\end{equation}
If $([C(k)]_{\mathbb{R}_i},A(k))$ is uniformly observable, then for any $0\leq \hat{P}(0|0)<\infty I$, there exists finite constant $p_{u,i}>0$ and $K>0$ independent of $\hat{P}(0|0)$ such that
\begin{equation}\begin{aligned}
\mathrm{E}[\|x(k)-\hat{x}_i(k|k)\|^2]\leq\mathrm{Tr}(\hat{P}_i(k|k))\leq p_{u,i},\ \forall k\geq K.
\end{aligned}\end{equation}
\end{theorem}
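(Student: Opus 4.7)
The plan is to follow the two-stage blueprint indicated in the theorem statement: first derive a uniform lower bound on the information matrix $\hat P_i(k|k)^{-1}$ in terms of a windowed observability Gramian, then identify that Gramian with the one built from $[C(k)]_{\mathbb R_i}$, and finally apply Remark~1 to close the mean-square bound. Once a uniform $\hat P_i(k|k)\leq (\tilde\gamma o)^{-1}I$ is established for $k\geq K$, the consistency inequality in Remark~1 immediately gives $\mathrm E\|x(k)-\hat x_i(k|k)\|^2\leq \mathrm{Tr}(\hat P_i(k|k))\leq n/(\tilde\gamma o)=:p_{u,i}$, independently of $\hat P(0|0)$.

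For Stage~1 I would invert (10) and (8) to write
\begin{equation*}
\hat P_i(k|k)^{-1}=\sum_{j\in\mathbb I_i(k)}\pi_{ji}(k)\bigl[P_j(k|k-1)^{-1}+C_j(k)^{\mathrm T}R_j(k)^{-1}C_j(k)\bigr],
\end{equation*}
and then establish the one-step propagation inequality $P_j(k|k-1)^{-1}\geq \alpha\, A(k-1)^{-\mathrm T}\hat P_j(k-1|k-1)^{-1}A(k-1)^{-1}$, with $\alpha>0$ depending only on $a_l,q_u$ and on an a priori \emph{lower} bound on the covariance $\hat P_j(k-1|k-1)$. Such a lower bound on the covariance is immediate because $Q(k)\geq q_lI$ forces $P_j(k|k-1)\geq q_lI$, after which the bounded Kalman gain ensures $\hat P_j(k|k)\geq \underline{p}I$ for a constant $\underline p$ coming from Assumption~1. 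Invertibility of $A(k)$ (guaranteed by $A(k)A(k)^{\mathrm T}\geq a_lI$) is essential for writing the propagation in terms of $A^{-1}$. Iterating this propagation together with the CI discount $\pi_{ji}(k)\geq \pi_{l,ji}$ along paths of length at most $\mu$, each $j\in\mathbb R_i$ contributes a summand
\begin{equation*}
\gamma\,\Phi_A(k-\ell,k)^{-\mathrm T}C_j(k-\ell)^{\mathrm T}R_j(k-\ell)^{-1}C_j(k-\ell)\Phi_A(k-\ell,k)^{-1}
\end{equation*}
to the lower bound on $\hat P_i(k|k)^{-1}$, where $\gamma$ is a uniform product of $\alpha$'s and $\pi_l$'s collected along that path.

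Stage~2 then invokes the definition of $\mathbb R_i$: once $k\geq K$ for a fixed start-up time proportional to $\mu$, every $j\in\mathbb R_i$ has reached $i$ inside the window $[k-\mu+1,k]$ through some time-varying path, so the individual summands stack row-wise to yield
\begin{equation*}
\hat P_i(k|k)^{-1}\geq \tilde\gamma\sum_{\ell=0}^{\mu-1}\Phi_A(k-\ell,k)^{-\mathrm T}[C(k-\ell)]_{\mathbb R_i}^{\mathrm T}[C(k-\ell)]_{\mathbb R_i}\Phi_A(k-\ell,k)^{-1}.
\end{equation*}
Up to a time shift, the right-hand side is $\tilde\gamma$ times the observability Gramian of Definition~2 for $([C(k)]_{\mathbb R_i},A(k))$, so the uniform-observability hypothesis produces $\hat P_i(k|k)^{-1}\geq \tilde\gamma o I$ and hence the desired uniform upper bound on $\hat P_i(k|k)$.

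The main obstacle is the bookkeeping in Stage~1: different nodes $j\in\mathbb R_i$ may reach $i$ through paths of different lengths, at different starting times within the window, and with different products of $\pi$'s and $\alpha$'s. The argument must be organised so that the resulting $\tilde\gamma$ is uniform in $k$ and in $\hat P(0|0)$; the finiteness of the worst-case joint-reachability horizon $\mu$, together with the uniform bounds in Assumption~1 on $\pi_{ji}(k)$, $A(k)A(k)^{\mathrm T}$, $R_i(k)$ and $Q(k)$, is precisely what keeps $\tilde\gamma$ bounded away from zero. A secondary technicality is that the one-step propagation inequality relies on a \emph{lower} bound for $\hat P_j$, which is why the Assumption~1 requirement $Q(k)\geq q_lI>0$ cannot be dropped.
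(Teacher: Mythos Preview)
Your approach is exactly the paper's: rewrite $\hat P_i^{-1}$ in information form, use the uniform lower bound on $\hat P_j$ coming from $Q\geq q_lI$ to convert the prediction step into a contraction $P_j(k|k-1)^{-1}\geq\gamma\,A^{-\mathrm T}\hat P_j^{-1}A^{-1}$ with a fixed $\gamma\in(0,1)$, iterate this along communication paths, and recognise the accumulated sum as the Gramian of $([C]_{\mathbb R_i},A)$.

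There is one technical point you underestimate. The set $\mathbb R_i$ is defined through reachability on the \emph{union} graph $\bigcup_{\iota=k}^{k+\mu}\mathbb G(\iota)$, and a path on the union graph is not the same thing as a time-respecting path in which the $\ell$th edge lies in $\mathbb E(k+\ell)$. Your claim that every $j\in\mathbb R_i$ reaches $i$ ``inside the window $[k-\mu+1,k]$ through some time-varying path'' therefore does not follow from the definition, and your window length $\mu$ is too short. The paper isolates this conversion as a separate combinatorial lemma (Lemma~A.1): a pigeonhole argument over the finitely many simple paths shows that joint reachability does yield a time-respecting path of some bounded length $s_i$, but $s_i$ is of order $M(N-1)\mu$ with $M=\sum_{\iota=0}^{N-2}N!/(N-\iota)!$, not $\mu$. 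Consequently the recursion must be unrolled over a window of length $s$ large enough to cover both the travel time $s_i$ \emph{and} the observability horizon, and the Gramian one actually recovers sits only on the sub-interval $\{s_i+1,\dots,s\}$, with the first $s_i$ steps consumed by the information transit. With that correction your constant $\tilde\gamma$ becomes $\beta_i\gamma^{s-1}/(r_u a_u^{s-1})$ for some path-weight bound $\beta_i>0$, which is still uniform and independent of $\hat P(0|0)$, and the remainder of your outline goes through unchanged.
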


\begin{proof}
By utilizing the  Sherman–Morrison–Woodbury formula \cite{horn2012matrix}, one can reformulate $P_{i}(k+1|k+1)$ as
\begin{equation}\begin{aligned}
P_{i}(k+1|k+1)=&\big(P^{-1}_{i}(k+1|k)+C_i(k+1)^\mathrm{T}\\
&\times R_i(k+1)^{-1}C_i(k+1)\big)^{-1}.
\end{aligned}\end{equation}
Substituting (17) into (10) yields
\begin{equation}\begin{aligned}
&\ \ \ \hat{P}^{-1}_i(k+1|k+1)\\
&=\sum_{j\in\mathbb{N}}\pi_{ji}(k+1)\big(A(k)\hat{P}_j(k|k)A(k)^\mathrm{T}+Q(k)\big)^{-1}\\
&\ \ \ \ \ \ \ \ +\pi_{ji}(k+1)C_j(k+1)^\mathrm{T}R_j(k+1)C_j(k+1).
\end{aligned}\end{equation}
Under Assumption 1, one can derive that
\begin{equation}\begin{aligned}
&\ \ \ \hat{P}^{-1}_i(k+1|k+1)\\
&\leq\sum_{j\in\mathbb{N}}\pi_{ji}(k+1)Q(k)^{-1}+\pi_{ji}(k+1)C_j(k+1)^\mathrm{T}\\
&\ \ \ \ \ \ \ \ \times R_j(k+1)C_j(k+1)\\
&\leq(\frac{1}{q_l}+r_{u}c_{u})I,
\end{aligned}\end{equation}
\begin{equation}\begin{aligned}
Q(k)\leq \frac{q_uq_l}{a_l(1+q_lr_{u}c_{u})}A(k)\hat{P}_{i}(k|k)A(k)^\mathrm{T},
\end{aligned}\end{equation}
where $r_u\triangleq\max\{r_{u,1},r_{u,2},\cdots,r_{u,N}\}$ and $c_u\triangleq\max\{c_{u,1},c_{u,2},\cdots,c_{u,N}\}$.
Then, it follows from (18) and (20) that
\begin{equation}\begin{aligned}
&\ \ \ \hat{P}^{-1}_i(k+1|k+1)\\
&\geq\sum_{j\in\mathbb{N}}\gamma \pi_{ji}(k+1)A(k)^\mathrm{-T}\hat{P}_{i}(k|k)^{-1}A(k)^{-1}\\
&\ \ \ \ \ \ \ \ +\pi_{ji}(k+1)C_j(k+1)^\mathrm{T}R_j(k+1)C_j(k+1),
\end{aligned}\end{equation}
where $\gamma=1/(1+\frac{q_uq_l}{a_l(1+q_lr_{u}c_{u})})=\frac{a_l(1+q_lr_{u}c_{u})}{q_uq_l+a_l(1+q_lr_{u}c_{u})}$.

Define the weighted adjoint matrix as
\begin{equation}\begin{aligned}
\Pi(k)\triangleq\begin{bmatrix}
{\pi_{11}(k)} & {\pi_{12}(k)} & {\cdots} & {\pi_{1N}(k)}\\
{\pi_{21}(k)} & {\pi_{22}(k)} & {\cdots} & {\pi_{2N}(k)}\\
{\vdots} & {\vdots} & {\ddots} & {\vdots}\\
{\pi_{N1}(k)} & {\pi_{N2}(k)} & {\cdots} & {\pi_{NN}(k)}
\end{bmatrix}.
\end{aligned}\end{equation}
Moreover, define $\Pi(k,k+\iota)\triangleq\Pi(k)\Pi(k+1)\times\cdots\times\Pi(k+\iota)$ and $\Pi_{ij}(k,k+\iota)$ as $(i,j)$th element of $\Pi(k,k+\iota)$. With these definitions, recursing the matrix inequality (21) yields
\begin{equation}\begin{aligned}
&\hat{P}_i(k+s|k+s)^{-1}\\
\geq&\frac{1}{r_u}\sum^{s}_{\iota=1}\gamma^{\iota-1}\sum_{j\in\mathbb{N}}\Pi_{ji}(k+s-\iota+1,k+s)\\
&\times\Phi_A(k+s-\iota+1,k+s)^\mathrm{T}C_j(k+s-\iota+1)^\mathrm{T}\\
&\times C_j(k+s-\iota+1)\Phi_A(k+s-\iota+1,k+s)\\
\geq&\frac{\gamma^{s-1}}{r_u}\Phi_A(k+s,k+1)^{-\mathrm{T}}\\
&\times\big(\sum^{s}_{\iota=1}\sum_{j\in\mathbb{N}}\Pi_{ji}(k+s-\iota+1,k+s)\\
&\times\Phi_A(k+s-\iota+1,k+1)^\mathrm{T}C_j(k+s-\iota+1)^\mathrm{T}\\
&\times C_j(k+s-\iota+1)\Phi_A(k+s-\iota+1,k+1)\big)\\
&\times\Phi_A(k+s,k+1)^{-1}.
\end{aligned}\end{equation}
Lemma A1 in Appendix shows that for any $k\geq1$ and $j\in\mathbb{R}_i$, there exists a path of length $0<s_i<\infty$ from node $j$ to node $i$ over the interval $[k,k+s_i]$. This implies that $\Pi_{ji}(k,k+s_i)\geq\beta_i$ for some $0<\beta_i<\infty$ and all $k\geq 1$ and $j\in\mathbb{R}_i$. Consequently, one can choose $0<s<\infty$ independent of $\hat{P}(0|0)$ such that
\begin{equation}\begin{aligned}
&\hat{P}^{-1}_i(k+s|k+s)\\
\geq&\beta_i\frac{\gamma^{s-1}}{r_u}\Phi_A(k+s,k+1)^{-\mathrm{T}}\\
&\times\big(\sum^{s}_{\iota=s_i+1}\Phi_A(k+s-\iota+1,k+1)^\mathrm{T}\\
&\times [C(k+s-\iota+1)]_{\mathbb{R}_i}^\mathrm{T}[C(k+s-\iota+1)]_{\mathbb{R}_i}\\
&\times\Phi_A(k+s-\iota+1,k+1)\big)\Phi_A(k+s,k+1)^{-1}\\
\geq&\alpha_i\beta_i\frac{\gamma^{s-1}}{r_ua_u^{s-1}},
\end{aligned}\end{equation}
where $0<\alpha_i<\infty$ is a constant independent of $\hat{P}(0|0)$ such that
\begin{equation}\begin{aligned}
\sum^{s}_{\iota=s_i+1}&\Phi_A(k+s-\iota+1,k+1)^\mathrm{T}\\
&\times [C(k+s-\iota+1)]_{\mathbb{R}_i}^\mathrm{T}[C(k+s-\iota+1)]_{\mathbb{R}_i}\\
&\times\Phi_A(k+s-\iota+1,k+1)\geq\alpha_iI
\end{aligned}\end{equation}
for $k\geq 1$. Since $([C(k)]_{\mathbb{R}_i},A(k))$ is uniformly observable, such a constant must exist provided that $0<s<\infty$ is sufficiently large. Moreover, it can be seen from (24) that the upper bound is independent of the initial condition. The proof is completed.
\end{proof}

%
\begin{corollary}
If the time-varying graph $\mathbb{G}(k)$ is jointly strongly connected and $([C(k)]_{\mathbb{N}},A(k))$ is uniformly observable, then (14) holds for $i\in\mathbb{N}$.
\end{corollary}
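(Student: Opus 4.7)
The plan is to derive Corollary 1 as an immediate specialization of Theorem 1. The central observation is that joint strong connectivity of $\mathbb{G}(k)$ forces the reachability sets $\mathbb{R}_i$ defined in Section II to coincide with the entire node set $\mathbb{N}$, at which point the observability hypothesis of the corollary is exactly the observability hypothesis required by Theorem 1.

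First I would unwind the definitions. By hypothesis there exists a finite $\nu\geq 0$ such that $\bigcup_{\iota=k}^{k+\nu}\mathbb{G}(\iota)$ is strongly connected for every $k\geq 0$. Fixing any $i,j\in\mathbb{N}$ with $j\neq i$, strong connectivity of that union graph guarantees a directed path from $j$ to $i$ using edges drawn from the time window $[k,k+\nu]$. This is precisely the condition, with $\mu=\nu$, for node $i$ to be jointly reachable from node $j$ on $\mathbb{G}(k)$. Since self-loops are permitted (so $i\in\mathbb{R}_i$ trivially) and $j$ was arbitrary, I conclude $\mathbb{R}_i=\mathbb{N}$ for every $i\in\mathbb{N}$.

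With that identification, the stacked measurement matrix defined in (13) reduces to $[C(k)]_{\mathbb{R}_i}=[C(k)]_{\mathbb{N}}$ for each $i$. The assumed uniform observability of $([C(k)]_{\mathbb{N}},A(k))$ is therefore identical to the uniform observability of $([C(k)]_{\mathbb{R}_i},A(k))$ required as the only node-specific hypothesis of Theorem 1. Invoking Theorem 1 for each $i\in\mathbb{N}$ then delivers the desired bound (14) with node-dependent constants $p_{u,i}$ and $K$.

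I do not expect any real obstacle here; the work is entirely bookkeeping about the graph-theoretic definitions. The only mild subtlety is making sure the constant $s_i$ arising inside the proof of Theorem 1 (the length of a path from $j$ to $i$ guaranteed by Lemma A1 in the Appendix) is finite uniformly in $k$; but under joint strong connectivity one may simply take $s_i\leq N\nu$, since at most $N-1$ concatenated windows of length $\nu+1$ suffice to traverse any pair of nodes, and this bound is already absorbed into the statement of Theorem 1.
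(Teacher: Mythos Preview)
Your proposal is correct and matches the paper's own proof essentially verbatim: the paper simply notes that joint strong connectivity forces $\mathbb{R}_i=\mathbb{N}$ for every $i\in\mathbb{N}$ and then invokes Theorem~1. Your additional unpacking of the definitions and the remark about the constant $s_i$ are harmless elaborations of what the paper states in one line.
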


\begin{proof}
The time-varying graph $\mathbb{G}(k)$ being jointly strongly connected implies that $\mathbb{R}_i=\mathbb{N}$ for $i\in\mathbb{N}$. Then the corollary follows immediately from Theorem 1.
\end{proof}


\begin{remark}
Indeed, if $j$ does not belong to $\mathbb{R}_i$, then node $i$ is not able to get information from node $j$. In other words, Theorem 1 implies that the stability of each node is only relevant to those nodes that can uniformly affect it, which is consistent with our intuition.
\end{remark}
\section{The Periodic Time-Varying Case}
In this section we discuss the case where the system is periodic. This can be summarized as the following assumption.
\begin{assumption}
The system parameters satisfy
\begin{equation}\begin{aligned}
&A(k)=A(k+T),\ Q(k)=Q(k+T),\\
&C_i(k+1)=C_i(k+1+T),\\
&R_i(k+1)=R_i(k+1+T),\\
&\mathbb{G}(k+1)=\mathbb{G}(k+1+T),\\
&\pi_{ji}(k+1)=\pi_{ji}(k+1+T),\nonumber
\end{aligned}\end{equation}
for all $k\geq0$ and $i,j\in\mathbb{N}$, where $1\leq T<\infty$ is the period.
\end{assumption}


For ease of presentation, we define $\sigma_k$ as the reminder of $k\div T$ and $\mathbb{T}\triangleq\{0,1,\cdots,T-1\}$. Moreover, let us introduce several operators
\begin{equation}\begin{aligned}
\hat{\mathcal{L}}_\vartheta(X)\triangleq
\left\{ \begin{array}{l}
\mathcal{L}_{\vartheta-1}\circ\cdots\circ\mathcal{L}_{0}\circ\mathcal{L}_{T-1}\circ\cdots\circ\mathcal{L}_{\vartheta}(X),\\ \quad\quad\quad\quad\quad\quad\quad\quad\quad\quad\mathrm{if}\ \vartheta\in\mathbb{T}\setminus\{0\},\\
\mathcal{L}_{T-1}\circ\cdots\circ\mathcal{L}_{0}(X),\ \mathrm{if}\ \vartheta=0,
\end{array} \right.
\end{aligned}\end{equation}
\begin{equation}\begin{aligned}
\mathcal{L}_{\vartheta}(X)\triangleq\sum_{i\in\mathbb{N}}&S_i(\vartheta)^\mathrm{T}\big(\bar{A}(\vartheta)^{-\mathrm{T}}X\bar{A}(\vartheta)^{-1}+\bar{C}(\vartheta+1)^\mathrm{T}\\
&\times R(\vartheta+1)^{-1}\bar{C}(\vartheta+1)-\bar{A}(\vartheta)^{-\mathrm{T}}X\\
&\times(X+\bar{Q}(\vartheta))^{-1}X\bar{A}(\vartheta)^{-1}\big)S_i(\vartheta),
\end{aligned}\end{equation}
where
\begin{equation}\begin{aligned}
\bar{A}(k)\triangleq I\otimes A(k),\ \bar{Q}(k)=\bar{A}(k)^{\mathrm{T}}\big(I\otimes Q(k)\big)^{-1}\bar{A}(k),
\end{aligned}\end{equation}
\begin{equation}\begin{aligned}
S_i(k)\triangleq&\big(\bar{\Pi}(k+1)\otimes I\big)\\
&\times(0_{(i-1)n\times (i-1)n}\oplus I_{n}\oplus 0_{(N-i)n\times (N-i)n}),
\end{aligned}\end{equation}
\begin{equation}\begin{aligned}
\bar{C}(k)\triangleq C_1(k)\oplus C_2(k)\oplus\cdots\oplus C_N(k),
\end{aligned}\end{equation}
\begin{equation}\begin{aligned}
R(k)\triangleq R_1(k)\oplus R_2(k)\oplus\cdots\oplus R_N(k),
\end{aligned}\end{equation}
\begin{equation}\begin{aligned}
\bar{\Pi}(k)\triangleq\begin{bmatrix}
{\sqrt{\pi_{11}(k)}} & {\sqrt{\pi_{12}(k)}} & {\cdots} & {\sqrt{\pi_{1N}(k)}}\\
{\sqrt{\pi_{21}(k)}} & {\sqrt{\pi_{22}(k)}} & {\cdots} & {\sqrt{\pi_{2N}(k)}}\\
{\vdots} & {\vdots} & {\ddots} & {\vdots}\\
{\sqrt{\pi_{N1}(k)}} & {\sqrt{\pi_{N2}(k)}} & {\cdots} & {\sqrt{\pi_{NN}(k)}}
\end{bmatrix}.
\end{aligned}\end{equation}
\subsection{Convergence analysis}
The following theorem will show that CI-based distributed Kalman filtering converges in the periodic case. Considering the lengthy process of the proof, we present the idea of the proof roughly.

We first construct an intermediate difference equation. Second, we show that the difference equation converges periodically to $X_{\vartheta}$ ($\forall\vartheta\in\mathbb{T}$) for an initial value of $0$. Third, we show that the difference equation also converges periodically to $X_{\vartheta}$ for any initial value greater than or equal to $X_{\vartheta}$ in the sense of Loewner partial order. Fourth, since $0\leq X\leq X_{\vartheta}+X$ for any initial value $X\geq0$, it follows from the sandwich principle that the difference equation converges periodically to $X_{\vartheta}$ for any initial value. Finally, select specific initial conditions to equate the difference equation solution with the covariance of CI-based distributed Kalman filtering, thereby proving convergence.

\begin{theorem}
Consider the plant (1), sensor network (2), and CI-based distributed Kalman filtering (4)-(10). Let Assumptions 1 and 2 hold. If $([C(k)]_{\mathbb{R}_i},A(k))$ is uniformly observable for $i\in\mathbb{N}$, then for any $0\leq \hat{P}(0|0)<\infty I$ and $\vartheta\in\mathbb{T}$, there exists $0<\hat{P}_\vartheta<\infty I$ independent of $\hat{P}(0|0)$ such that
\begin{equation}\begin{aligned}
\lim_{j\to\infty}\hat{P}(&jT+\vartheta|jT+\vartheta)=\hat{P}_\vartheta,
\end{aligned}\end{equation}
where $\hat{P}(k|k)=\hat{P}_1(k|k)\oplus\hat{P}_2(k|k)\oplus\cdots\oplus\hat{P}_N(k|k)$. Moreover, $\hat{P}_\vartheta^{-1}$ is the unique symmetric positive definite solution to the matrix equation $X=\hat{\mathcal{L}}_\vartheta(X)$, $\vartheta\in\mathbb{T}$.
\end{theorem}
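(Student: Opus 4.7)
\emph{Setup.} The plan is to recast the filter recursion as an autonomous matrix difference equation on inverse covariances and exploit its Loewner-monotonicity. Using the Sherman–Morrison–Woodbury identity from (17)–(18) and stacking the node blocks through the operators $S_i$, the block-diagonal inverse covariance $X(k)\triangleq \hat{P}(k|k)^{-1}$ satisfies $X(k+1)=\mathcal{L}_{\sigma_k}(X(k))$; one checks that $\mathcal{L}_\vartheta$ preserves block-diagonality because $\sum_i S_i^{\mathrm{T}}(\cdot)S_i$ is precisely an ``extract block-diagonal'' aggregation after conjugation by $\bar{\Pi}\otimes I$. Hence the period map at phase $\vartheta$ is exactly $\hat{\mathcal{L}}_\vartheta$, and proving (34) reduces to showing that the iterates $\hat{\mathcal{L}}_\vartheta^j(X_0)$ converge for every admissible $X_0$.

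\emph{Existence of the periodic fixed point.} First I would verify that $\mathcal{L}_\vartheta$ is monotone non-decreasing in the Loewner order. The only $X$-dependent piece is $\bar A^{-\mathrm{T}}\bigl(X-X(X+\bar Q)^{-1}X\bigr)\bar A^{-1}=\bar A^{-\mathrm{T}}\bar Q(X+\bar Q)^{-1}X\bar A^{-1}$, whose monotonicity follows from a standard Schur-complement argument, and the surrounding $S_i^{\mathrm{T}}(\cdot)S_i$ aggregation preserves it. Starting at $X_0=0$, the iterates $\{\hat{\mathcal{L}}_\vartheta^j(0)\}_{j\geq 0}$ are therefore non-decreasing; inequality (19) from the proof of Theorem~1 supplies a uniform upper bound $(\tfrac{1}{q_l}+r_u c_u)I$ independent of $j$, so monotone convergence yields a limit $X_\vartheta\geq 0$ with $\hat{\mathcal{L}}_\vartheta(X_\vartheta)=X_\vartheta$. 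Strict positive definiteness $X_\vartheta>0$ is inherited from the uniform positive lower bound on $\hat{P}(k|k)^{-1}$ derived in (24) under the uniform observability hypothesis.

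\emph{Global attraction and identification.} Next I would prove that $\hat{\mathcal{L}}_\vartheta^j(X_0)\to X_\vartheta$ for every $X_0\geq X_\vartheta$. Setting $Y_j\triangleq \hat{\mathcal{L}}_\vartheta^j(X_0)-X_\vartheta\geq 0$ and linearising the Riccati nonlinearity at $X_\vartheta$ produces an inequality $Y_{j+1}\leq \Psi(Y_j)$ for a linear operator $\Psi$ whose spectral radius is strictly less than one, the contractivity stemming from $X_\vartheta>0$ together with uniform observability. Combined with the previous step, Loewner monotonicity and the sandwich $0\leq X_0\leq X_0+X_\vartheta$ yield $\hat{\mathcal{L}}_\vartheta^j(0)\leq \hat{\mathcal{L}}_\vartheta^j(X_0)\leq \hat{\mathcal{L}}_\vartheta^j(X_0+X_\vartheta)$; since both outer sequences converge to $X_\vartheta$, so does the middle one, for every $X_0\geq 0$. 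Specialising $X_0=\hat{P}(0|0)^{-1}$ (with a one-step regularisation if $\hat P(0|0)$ is singular) and inverting gives $\hat{P}(jT+\vartheta|jT+\vartheta)\to \hat{P}_\vartheta\triangleq X_\vartheta^{-1}$. Uniqueness of $\hat{P}_\vartheta^{-1}$ as a symmetric positive definite solution of $X=\hat{\mathcal{L}}_\vartheta(X)$ is then immediate: any such solution is a stationary initial condition and must coincide with its own limit $X_\vartheta$.

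\emph{Main obstacle.} The decisive step is the decreasing-from-above convergence: Loewner monotonicity alone only guarantees a decreasing sequence bounded below by $X_\vartheta$, which a priori could settle at a \emph{different} fixed point. Pinning the limit down requires a genuine contraction, and because the $S_i$ aggregation mixes nodes in an affine but non-strictly-contractive manner, the contraction cannot be read off block-wise. One has to exploit the observability-induced positivity of $X_\vartheta$ globally, most plausibly through a Stein-type Lyapunov argument on the linearised error around $X_\vartheta$; this is where the bulk of the algebraic burden of the proof will sit.
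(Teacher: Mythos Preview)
Your proposal is correct and follows the paper's proof almost step for step: inverse-covariance recursion $X(k+1)=\mathcal{L}_{\sigma_k}(X(k))$, monotone convergence from $X(0)=0$ to the fixed point $X_\vartheta$, linearisation at $X_\vartheta$ to obtain a Stein-type contraction (the paper's $\hat{\mathcal{S}}_\vartheta$ is your $\Psi$, with $X_\vartheta>\hat{\mathcal{S}}_\vartheta(X_\vartheta)$ coming from $\Upsilon_\vartheta>0$ once $X_\vartheta>0$), and the sandwich $0\leq X(0)\leq X(0)+X_\vartheta$. One small point to tighten: inequality~(24) applies only to trajectories with positive-definite initial data, so inheriting $X_\vartheta>0$ for the sequence started at $X(0)=0$ requires a continuity argument as the initial condition tends to zero, which the paper carries out explicitly.
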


\begin{proof}
Let us define $X(k)=X_1(k)\oplus X_2(k)\oplus\cdots\oplus X_N(k)$, which is the solution to the intermediate matrix difference equation
\begin{equation}\begin{aligned}
X(k+1)=&\mathcal{L}_{\sigma_k}(X(k)).
\end{aligned}\end{equation}
Next, we will prove the theorem by analyzing the fixed point of the intermediate matrix difference equation (34).  Under Assumption 2, one can derive that
\begin{equation}\begin{aligned}
X((i+1)T+\vartheta)=\hat{\mathcal{L}}_\vartheta(X(iT+\vartheta)),\ \forall\vartheta\in\mathbb{T}.
\end{aligned}\end{equation}
To emphasize the effect of the initial condition on $X(k)$, define
\begin{equation}\begin{aligned}
X(k;k_0,X(k_0))=&\mathcal{L}_{\sigma_{k-1}}\circ\mathcal{L}_{\sigma_{k-2}}\circ\cdots\circ\mathcal{L}_{\sigma_{k_0}}(X(k_0))\nonumber
\end{aligned}\end{equation}
for $k>k_0$, and $X(k;k_0,X(k_0))=X(k_0)$ for $k=k_0$.

Introduce an auxiliary variable
\begin{equation}\begin{aligned}
\Theta\triangleq\sum_{i\in\mathbb{N}}&S_i(0)^\mathrm{T}\big((I\otimes Q(0))^{-1}+\bar{C}(1)^\mathrm{T}R(1)^{-1}\bar{C}(1)\big)S_i(0).\nonumber
\end{aligned}\end{equation}
Clearly, $\Theta>0$. Let $X(1)=\Theta$. It can be derived from Lemma A.2 in Appendix that
\begin{equation}\begin{aligned}
X(T+1;1,\Theta)=\mathcal{L}_{0}(X(T;1,\Theta))\leq\Theta=X(1).\nonumber
\end{aligned}\end{equation}
Then, utilizing Lemma A.3 in Appendix yields
\begin{equation}\begin{aligned}
X(2T+1;1,\Theta)=\hat{\mathcal{L}}_{1}(X(T+1;1,\Theta))\leq X(T+1;1,\Theta).\nonumber
\end{aligned}\end{equation}
Repeating the procedure shows that the sequence $\{X(iT+1;1,\Theta)\}_{i\geq0}$ is monotonically non-increasing. Consequently, one has
\begin{equation}\begin{aligned}
\lim_{i\to\infty}\hat{\mathcal{L}}^i_1(\Theta)=\lim_{i\to\infty}X(iT+1;1,\Theta)\leq\Theta.
\end{aligned}\end{equation}

Let $X(0)=0$, one can derive that
\begin{equation}\begin{aligned}
X(T;0,0)=\hat{\mathcal{L}}_{0}(X(0))\geq 0=X(0;0,0).
\end{aligned}\end{equation}
It follows from Lemma A3 in Appendix that
\begin{equation}\begin{aligned}
X(2T;0,0)=\hat{\mathcal{L}}_{0}(X(T;0,0))\geq \hat{\mathcal{L}}_{0}(0)=X(T,0,0).
\end{aligned}\end{equation}
Repeating the procedure shows that the sequence $\{X(iT;0,0)\}_{i\geq0}$ is monotonically non-decreasing. Moreover, one can know from the inequality $\Theta\geq\mathcal{L}_0(0)$ and Lemma A.3 in Appendix that
\begin{equation}\begin{aligned}
\lim_{i\to\infty}X(iT;0,0)&=\lim_{i\to\infty}\hat{\mathcal{L}}^i_0(0)\\
&=\lim_{i\to\infty}\mathcal{L}_{T-1}\circ\cdots\circ\mathcal{L}_{1}\circ\hat{\mathcal{L}}^{i-1}_1\circ\mathcal{L}_0(0)\\
&\leq\lim_{i\to\infty}\mathcal{L}_{T-1}\circ\cdots\circ\mathcal{L}_{1}\circ\hat{\mathcal{L}}^{i-1}_1(\Theta)\\
&\leq \mathcal{L}_{T-1}\circ\cdots\circ\mathcal{L}_{1}(\Theta).
\end{aligned}\end{equation}
This implies that the sequence $\{X(iT,0,0)\}_{i\geq0}$ is uniformly bounded from above. Consequently,
\begin{equation}\begin{aligned}
\lim_{i\to\infty}X(iT;0,0)=X_0,
\end{aligned}\end{equation}
where $0\leq X_0<\infty I$. Then, utilizing the equality (34) yields
\begin{equation}\begin{aligned}
\lim_{i\to\infty}X(iT+\vartheta;0,0)=X_\vartheta\ \text{for}\ \vartheta\in\mathbb{T},
\end{aligned}\end{equation}
where
\begin{equation}\begin{aligned}
X_\vartheta=
\left\{ \begin{array}{l}
\mathcal{L}_{\vartheta-1}(X_{\vartheta-1}),\ \mathrm{if}\ \vartheta\in\mathbb{T}\setminus\{0\},\\
\mathcal{L}_{T-1}(X_{T-1}),\ \mathrm{if}\ \vartheta=0.
\end{array} \right.
\end{aligned}\end{equation}

Let $X(0)=X_0+\Upsilon$, where $0\leq \Upsilon<\infty I$ is arbitrary. It is trivial that $X(0;0,X_0+\Upsilon)=X_0+\Upsilon\geq X_{\sigma_0}$. Suppose that $X(k-1;0,X_0+\Upsilon)\geq X_{\sigma_{k-1}}$. Then, it follows from (42) that
\begin{equation}\begin{aligned}
X(k;0,X_0+\Upsilon)&=\mathcal{L}_{\sigma_{k-1}}(X(k-1;0,X_0+\Upsilon))\\
&\geq\mathcal{L}_{\sigma_{k-1}}(X_{\sigma_{k-1}})=X_{\sigma_k}.
\end{aligned}\end{equation}
This induction implies that
\begin{equation}\begin{aligned}
X(iT+\vartheta;0,X_0+\Upsilon)\geq X_\vartheta,\ \forall i\geq0,\ \vartheta\in\mathbb{T}.
\end{aligned}\end{equation}
Note that the discrete-time Riccati equation has the reformulation
\begin{equation}\begin{aligned}
&AXA^\mathrm{T}+Q-AXC^\mathrm{T}(CXC^\mathrm{T}+R)^{-1}CXA^\mathrm{T}\\
=&(A-KC)X(A-KC)^\mathrm{T}+KRK^\mathrm{T}+Q
\end{aligned}\end{equation}
with $Q\geq0$, $R>0$ and $K=AXC^\mathrm{T}(CXC^\mathrm{T}+R)^{-1}$. Based on this reformulation, one can derive that
\begin{equation}\begin{aligned}
\mathcal{L}_{\vartheta}(X_\vartheta)=\mathcal{G}_{\vartheta}(X_\vartheta),\ \forall\vartheta\in\mathbb{T},
\end{aligned}\end{equation}
where
\begin{equation}\begin{aligned}
\mathcal{G}_{\vartheta}(X)\triangleq\mathcal{S}_\vartheta(X)+\Upsilon_\vartheta,
\end{aligned}\end{equation}
\begin{equation}\begin{aligned}
\mathcal{S}_{\vartheta}(X)\triangleq\sum_{i\in\mathbb{N}}&S_i(\vartheta)^\mathrm{T}\big(\bar{A}(\vartheta)^{-\mathrm{T}}-G(\vartheta)\big)X\\
&\times\big(\bar{A}(\vartheta)^{-\mathrm{T}}-G(\vartheta)\big)^\mathrm{T}S_i(\vartheta),
\end{aligned}\end{equation}
\begin{equation}\begin{aligned}
\Upsilon_{\vartheta}\triangleq\sum_{i\in\mathbb{N}}&S_i(\vartheta)^\mathrm{T}\big(G(\vartheta)\bar{Q}(\vartheta)G(\vartheta)^\mathrm{T}+\bar{C}(\vartheta+1)^\mathrm{T}\\
&\times R(\vartheta+1)^{-1}\bar{C}(\vartheta+1)\big)S_i(\vartheta),
\end{aligned}\end{equation}
\begin{equation}\begin{aligned}
G(\vartheta)\triangleq\bar{A}(\vartheta)^{-\mathrm{T}}X_\vartheta\big(X_\vartheta+\bar{Q}(\vartheta)\big)^{-1}.
\end{aligned}\end{equation}
Moreover, it follows from (42) and (46) that
\begin{equation}\begin{aligned}
X_\vartheta=\hat{\mathcal{G}}_\vartheta(X_{\vartheta}),\ \forall\vartheta\in\mathbb{T},
\end{aligned}\end{equation}
where
\begin{equation}\begin{aligned}
\hat{\mathcal{G}}_\vartheta(X)\triangleq
\left\{ \begin{array}{l}
\mathcal{G}_{\vartheta-1}\circ\cdots\circ\mathcal{G}_{0}\circ\mathcal{G}_{T-1}\circ\cdots\circ\mathcal{G}_{\vartheta}(X),\\ \quad\quad\quad\quad\quad\quad\quad\quad\quad\quad\mathrm{if}\ \vartheta\in\mathbb{T}\setminus\{0\},\\
\mathcal{G}_{T-1}\circ\cdots\circ\mathcal{G}_{0}(X),\ \mathrm{if}\ \vartheta=0.
\end{array} \right.
\end{aligned}\end{equation}
Then, some tedious manipulation yields (c.f. Appendix)
\begin{equation}\begin{aligned}
&\ \ \ X(iT+\vartheta;0,X_0+\Upsilon)-X_\vartheta\\
&\leq \hat{\mathcal{S}}_\vartheta^{i}(X(\vartheta;0,X_0+\Upsilon)-X_\vartheta),
\end{aligned}\end{equation}
where
\begin{equation}\begin{aligned}
\hat{\mathcal{S}}_\vartheta(X)\triangleq
\left\{ \begin{array}{l}
\mathcal{S}_{\vartheta-1}\circ\cdots\circ\mathcal{S}_{0}\circ\mathcal{S}_{T-1}\circ\cdots\circ\mathcal{S}_{\vartheta}(X),\\ \quad\quad\quad\quad\quad\quad\quad\quad\quad\quad\mathrm{if}\ \vartheta\in\mathbb{T}\setminus\{0\},\\
\mathcal{S}_{T-1}\circ\cdots\circ\mathcal{S}_{0}(X),\ \mathrm{if}\ \vartheta=0.
\end{array} \right.
\end{aligned}\end{equation}
Since $\Upsilon_{\vartheta}>0$ for $\vartheta\in\mathbb{T}$ (c.f. Appendix), one has
\begin{equation}\begin{aligned}
X_\vartheta=
\left\{ \begin{array}{l}
\mathcal{G}_{\vartheta-1}(X_{\vartheta-1})>\mathcal{S}_{\vartheta-1}(X_{\vartheta-1}),\ \mathrm{if}\ \vartheta\in\mathbb{T}\setminus\{0\},\\
\mathcal{G}_{T-1}(X_{T-1})>\mathcal{S}_{T-1}(X_{T-1}),\ \mathrm{if}\ \vartheta=0.
\end{array} \right.\nonumber
\end{aligned}\end{equation}
This implies that
\begin{equation}\begin{aligned}
X_\vartheta&>\mathcal{S}_{\vartheta-1}(X_{\vartheta-1})\geq\cdots\\
&\geq\mathcal{S}_{\vartheta-1}\circ\cdots\circ\mathcal{S}_{0}\circ\mathcal{S}_{r-1}\circ\cdots\circ\mathcal{S}_{\vartheta}(X_\vartheta)\\
&=\hat{\mathcal{S}}_{\vartheta}(X_\vartheta),\ \forall\vartheta\in\mathbb{T}\setminus\{0\}
\end{aligned}\end{equation}
and
\begin{equation}\begin{aligned}
X_0&>\mathcal{S}_{T-1}(X_{T-1})\geq\cdots\\
&\geq\mathcal{S}_{T-1}\circ\cdots\circ\mathcal{S}_{0}(X_0)=\hat{\mathcal{S}}_0(X_0).
\end{aligned}\end{equation}
Based on (55) and (56), one can derive from the proposition A.3 in \cite{Fan992124} that
\begin{equation}\begin{aligned}
&\lim_{i\to\infty}X(iT+\vartheta;0,X_0+\Upsilon)-X_\vartheta\\
\leq&\lim_{i\to\infty}\hat{\mathcal{S}}_\vartheta^{i}(X(\vartheta;0,X_0+\Upsilon)-X_\vartheta)=0,\ \forall\vartheta\in\mathbb{T}.
\end{aligned}\end{equation}
Combining the inequalities (44) and (57) yields
\begin{equation}\begin{aligned}
\lim_{i\to\infty}X(iT+\vartheta;0,X_0+\Upsilon)=X_\vartheta,\ \forall \vartheta\in\mathbb{T}.
\end{aligned}\end{equation}
Since $0\leq\Upsilon<\infty I$ is arbitrary, one can conclude that
\begin{equation}\begin{aligned}
\lim_{i\to\infty}X(iT+\vartheta;0,X(0))=X_\vartheta,\ \forall \vartheta\in\mathbb{T},\ X_0\leq X(0)<\infty I.
\end{aligned}\end{equation}

For any $0\leq X(0)<\infty I$, the inequality $0\leq X(0)\leq X(0)+X_0$ is trivial. In this case, it can be derived from Lemma A4 in Appendix and the convergence results (41) and (59) that
\begin{equation}\begin{aligned}
\lim_{i\to\infty}X(iT+\vartheta;0,X(0))=X_\vartheta,\ \forall\ \vartheta\in\mathbb{T},\ 0\leq X(0)<\infty I.
\end{aligned}\end{equation}
Based on (41), one knows that the limit (60) is independent of the initial condition $X(0)\geq0$. Thus, $X_0$ is the unique symmetric positive definite solution to $X=\hat{\mathcal{L}}_0(X)$. Additionally, for $\vartheta\in\{1,2,\cdots,T-1\}$, one can decomposition $\hat{\mathcal{L}}_\vartheta^i$ as
\begin{equation}\begin{aligned}
\hat{\mathcal{L}}_\vartheta^i=\mathcal{L}_{\vartheta-1}\circ\cdots\circ\mathcal{L}_{0}\circ\hat{\mathcal{L}}_0^{i-1}\circ\mathcal{L}_{T-1}\circ\cdots\circ\mathcal{L}_{\vartheta}.
\end{aligned}\end{equation}
This implies that for any $0\leq X(\vartheta)<\infty I$,
\begin{equation}\begin{aligned}
&\lim_{i\to\infty}X(iT+\vartheta;\vartheta,X(\vartheta))\\
&=\lim_{i\to\infty}\hat{\mathcal{L}}_\vartheta^i(X(\vartheta))=\mathcal{L}_{\vartheta-1}\circ\cdots\circ\mathcal{L}_{0}(X_0)=X_\vartheta.
\end{aligned}\end{equation}
Thus, $X_\vartheta$ is the unique symmetric positive definite solution to $X=\hat{\mathcal{L}}_\vartheta(X)$ for $\vartheta\in\mathbb{T}$.

For any $0\leq \hat{P}(0|0)<\infty I$, one has $\hat{P}(1|1)>0$. Thus, if we set $X(1)^{-1}=\hat{P}(1|1)$, then Lemma A.2 implies that $X(k;1,X(1))^{-1}=\hat{P}(k|k)$ for $k\geq1$. Consequently, one has
\begin{equation}\begin{aligned}
&\lim_{i\to\infty}\hat{P}(iT+\vartheta|iT+\vartheta)^{-1}\\
=&\lim_{i\to\infty}\hat{\mathcal{L}}^i_\vartheta(X(\vartheta;0,X(0)))=X_\vartheta.
\end{aligned}\end{equation}
The proof is completed.
\end{proof}

%
\begin{remark}
Sometimes the periods of the graph $\mathbb{G}(k)$, plant (1), and sensor measurements (2) may be different. In this case, we can choose their least common multiple as the period of the whole system. Additionally, although we only prove that $\hat{P}_{i}(k|k)$ converges periodically, it is not difficult to conclude from (4)-(10) that the estimator parameters $K_i(k|k)$, $P_{i}(k|k-1)$ and $P_i(k|k)$ will also converge periodically.
\end{remark}

\subsection{Closed-Loop performance analysis}
Let us define
\begin{equation}
W_{i,j}(k)\triangleq
\left\{ \begin{array}{l}
\pi_{ji}(k)\hat{P}_i(k|k)P_{j}(k|k)^{-1},\ \mathrm{if}\ j\in\mathbb{I}_i(k),\\
0,\ \mathrm{if}\ j\notin\mathbb{I}_i(k).
\end{array} \right.
\end{equation}
Then, one can derive the recursive form of the local estimation error $e_{i}(k|k)\triangleq x(k)-x_{i}(k|k)$ as
\begin{equation}\begin{aligned}
e_{i}(k|k)&=\big(A(k-1)-K_{i}(k)C_i(k)A(k-1)\big)\\
&\ \ \ \times\sum_{j\in\mathbb{N}}W_{i,j}(k-1)e_j(k-1|k-1)\\
&\ \ \ +\big(I-K_{i}(k)C_i(k)\big)w(k-1)-K_{i}(k)v_i(k).\nonumber
\end{aligned}\end{equation}
Then, the global estimate error $e(k|k)\triangleq[e_{1}(k|k);e_{2}(k|k);\cdots;e_{N}(k|k)]$ can be recursively expressed by
\begin{equation}\begin{aligned}
e(k|k)&=F(k-1)e(k-1|k-1)+H(k)w(k-1)\\
&\ \ \ +K(k)v(k),
\end{aligned}\end{equation}
where
\begin{equation}\begin{aligned}
F(k)\triangleq&\begin{bmatrix}
{F_{1,1}(k)} & {F_{1,2}(k)} & {\cdots} & {F_{1,N}(k)}\\
{F_{2,1}(k)} & {F_{2,2}(k)} & {\cdots} & {F_{2,N}(k)}\\
{\vdots} & {\vdots} & {\vdots} & {\vdots}\\
{F_{N,1}(k)} & {F_{N,2}(k)} & {\cdots} & {F_{N,N}(k)}\\
\end{bmatrix},
\end{aligned}\end{equation}
\begin{equation}\begin{aligned}
K(k)\triangleq K_{1}(k)\oplus K_{2}(k)\cdots\oplus K_{N}(k),
\end{aligned}\end{equation}
\begin{equation}\begin{aligned}
H(k)\triangleq &\begin{bmatrix}
{I-K_{1}(k)C_1(k)}\\
{I-K_{2}(k)C_2(k)}\\
{\vdots}\\
{I-K_{N}(k)C_N(k)}\\
\end{bmatrix},
\end{aligned}\end{equation}
\begin{equation}\begin{aligned}
F_{i,j}(k)\triangleq\big(A(k)-K_{i}(k+1)C_i(k+1)A(k)\big)W_{i,j}(k).
\end{aligned}\end{equation}
Theorem 2 has proved that the CI-based distributed Kalman filtering converges in the periodic system. Thus, for $\vartheta\in\mathbb{T}$, we define the steady states
\begin{equation}\begin{aligned}
&\lim_{k\to\infty}P_{i}(k)=P_{i}^{[\vartheta]},\ \lim_{k\to\infty}\hat{P}_i(k)=\hat{P}_i^{[\vartheta]},\\ &\lim_{k\to\infty}P_{i}(k|k-1)=\bar{P}_{i}^{[\vartheta]},\ \lim_{k\to\infty}F(k)=F^{[\vartheta]},\\
&\lim_{k\to\infty}H(k)=H^{[\vartheta]},\ \lim_{k\to\infty}K(k)=K^{[\vartheta]},\\
&\lim_{k\to\infty}K_{i}(k)=K_{i}^{[\vartheta]},\ \lim_{k\to\infty}W_{i,j}(k)=W_{i,j}^{[\vartheta]},
\end{aligned}\end{equation}
where $k\in\{\kappa T+\vartheta|\kappa=0,1,2,\cdots\}$.

The following theorem will present the closed-loop performance of the estimation error system (65).

\begin{theorem}
Consider the plant (1), the sensor network (2), and CI-based distributed Kalman filtering (4)-(10). Let Assumptions 1 and 2 hold. If $([C(k)]_{\mathbb{R}_i},A(k))$ is uniformly observable for $i\in\mathbb{N}$, then
\begin{equation}\begin{aligned}
\rho(F^{[\vartheta-1]}\cdots F^{[0]}F^{[T-1]}\cdots F^{[\vartheta]})<1,\ \forall\vartheta\in\mathbb{T}\setminus \{0\},
\end{aligned}\end{equation}
\begin{equation}\begin{aligned}
\rho(F^{[T-1]}F^{[T-2]}\cdots F^{[0]})<1.
\end{aligned}\end{equation}
\end{theorem}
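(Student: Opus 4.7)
The plan is to use a scalar Lyapunov argument on the deterministic, noise-free error dynamics $e(k|k) = F(k-1) e(k-1|k-1)$. Since this system is $T$-periodic, exponential decay of every noise-free solution is equivalent to the monodromy matrices in (71)-(72) having spectral radius strictly less than one, so it suffices to exhibit such decay for every initial condition.

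Let $v_i(k) \triangleq e_i(k|k)^{\mathrm{T}} P_i(k|k)^{-1} e_i(k|k)$ and $V(k) = [v_1(k);\ldots;v_N(k)] \in \mathbb{R}^N$. The target estimate is the componentwise inequality $V(k) \leq \eta\, \Pi(k-1)^{\mathrm{T}} V(k-1)$ with some $\eta \in (0,1)$ uniform in $k$, which I would establish by chaining a fusion step and a local Kalman step. For the fusion step, the CI identity $\hat{P}_i^{-1} = \sum_j \pi_{ji} P_j^{-1}$ allows one to write $\hat{e}_i(k-1) = \hat{P}_i(k-1|k-1) \sum_j \pi_{ji}(k-1) P_j(k-1|k-1)^{-1} e_j(k-1|k-1)$ and, after two applications of Cauchy-Schwarz in the $P_j^{-1}$-inner product, to conclude
\begin{equation*}
\hat{e}_i(k-1)^{\mathrm{T}} \hat{P}_i(k-1|k-1)^{-1} \hat{e}_i(k-1) \leq \sum_{j\in\mathbb{N}} \pi_{ji}(k-1)\, v_j(k-1).
\end{equation*}
For the local Kalman step, the noise-free recursion $e_i(k|k) = (I - K_i(k) C_i(k)) A(k-1) \hat{e}_i(k-1)$ combined with $(I - K_i C_i) = P_i(k|k) P_i(k|k-1)^{-1}$, the monotonicity $P_i(k|k) \leq P_i(k|k-1)$, and the Sherman-Morrison rewriting
\begin{equation*}
\hat{P}_i^{-1} - A(k-1)^{\mathrm{T}} P_i(k|k-1)^{-1} A(k-1) = \hat{P}_i^{-1}\bigl(\hat{P}_i^{-1} + A(k-1)^{\mathrm{T}} Q(k-1)^{-1} A(k-1)\bigr)^{-1} \hat{P}_i^{-1}
\end{equation*}
deliver $A(k-1)^{\mathrm{T}} P_i(k|k-1)^{-1} A(k-1) \leq \eta\, \hat{P}_i(k-1|k-1)^{-1}$. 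Chaining the two estimates yields $v_i(k) \leq \eta \sum_j \pi_{ji}(k-1) v_j(k-1)$, i.e., $V(k) \leq \eta\, \Pi(k-1)^{\mathrm{T}} V(k-1)$.

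Because $\Pi(k-1)^{\mathrm{T}}$ is row-stochastic by the CI-weight normalization $\sum_j \pi_{ji}(k)=1$, the $\infty$-norm contraction $\|V(k)\|_\infty \leq \eta\, \|V(k-1)\|_\infty$ follows, and hence $\|V(k)\|_\infty \leq \eta^k \|V(0)\|_\infty$. Combined with the uniform lower bound $P_i(k|k)^{-1} \geq \gamma I$ obtained from Assumption 1 via the information-form update $P_i^{-1} = P_i(k|k-1)^{-1} + C_i^{\mathrm{T}} R_i^{-1} C_i$, this gives exponential decay of every noise-free trajectory $e(k|k)$, which is equivalent to $\rho(F^{[\vartheta-1]}\cdots F^{[0]} F^{[T-1]}\cdots F^{[\vartheta]}) < 1$ for every $\vartheta\in\mathbb{T}$, yielding both (71) and (72). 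The main obstacle is the uniformity of $\eta$, which amounts to a uniform upper bound on $\hat{P}_i(k|k)^{-1} + A(k-1)^{\mathrm{T}} Q(k-1)^{-1} A(k-1)$; the bound on $A^{\mathrm{T}} Q^{-1} A$ is immediate from Assumption 1, while the bound on $\hat{P}_i^{-1}$ requires a uniform lower bound on $\hat{P}_i(k|k)$, obtained by combining the CI identity, the uniform upper bound on each $P_j^{-1}$ from Assumption 1, and the convex-combination constraint $\sum_j \pi_{ji}=1$.
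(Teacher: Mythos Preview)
Your argument is correct and takes a genuinely different route from the paper. The paper establishes (72) via a matrix Lyapunov recursion: it sets $X(k)=F^{[\sigma_{k-1}]}X(k-1)(F^{[\sigma_{k-1}]})^{\mathrm{T}}+V^{[\sigma_k]}$ with $X(0)=0$, shows $V^{[0]}>0$ through the invertibility of $I-K_i^{[\vartheta]}C_i(\vartheta)$, and then proves by induction (using the CI consistency inequality of Lemma~A.5 together with the steady-state Riccati relations) that the diagonal blocks satisfy $X_{i,i}(k)\le P_i^{[\sigma_k]}$, hence $X(k)$ is uniformly bounded; Schur stability of the monodromy product follows, and (71) from $\rho(AB)=\rho(BA)$. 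Your route instead tracks the node-wise scalars $v_i=e_i^{\mathrm{T}}(P_i^{[\sigma_k]})^{-1}e_i$ along the noise-free periodic dynamics, uses the CI fusion inequality $\hat e_i^{\mathrm{T}}\hat P_i^{-1}\hat e_i\le\sum_j\pi_{ji}\,e_j^{\mathrm{T}}P_j^{-1}e_j$ and the strict one-step contraction of the prediction (driven by $Q>0$), and closes via row-stochasticity of $\Pi(\cdot)^{\mathrm{T}}$. This is more elementary---no matrix Lyapunov equation, no Lemma~A.5---and produces an explicit contraction rate; the paper's approach is closer to classical Riccati/Lyapunov stability arguments and packages the CI structure into a single boundedness step.

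Two points to tighten. First, make explicit that you are working with the steady-state $T$-periodic system $e(k)=F^{[\sigma_{k-1}]}e(k-1)$ and the limit matrices $P_i^{[\vartheta]},\hat P_i^{[\vartheta]}$ supplied by Theorem~2; your notation $P_i(k|k)$ suggests the pre-limit filter. Second, the uniformity of $\eta$ needs more than an upper bound on $\hat P_i^{-1}+A^{\mathrm{T}}Q^{-1}A$: after the Sherman--Morrison identity one must also bound $\hat P_i^{[\vartheta]}$ from \emph{above} to get $(I+\hat P_i^{1/2}A^{\mathrm{T}}Q^{-1}A\,\hat P_i^{1/2})^{-1}\ge(1-\eta)I$ with $\eta<1$. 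Similarly, the lower bound $P_i^{-1}\ge\gamma I$ does not follow from Assumption~1 and the information form (that combination gives an \emph{upper} bound on $P_i^{-1}$); it requires an upper bound on $P_i^{[\vartheta]}$. Both missing bounds are immediate once you invoke Theorem~2 (the $\hat P_i^{[\vartheta]},P_i^{[\vartheta]}$ are fixed positive-definite matrices indexed by $\vartheta\in\mathbb{T}$), so the argument closes---just redirect those justifications from Assumption~1 to Theorem~2.
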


\begin{proof}
Consider an intermediate difference equation
\begin{equation}\begin{aligned}
X(k)&=F^{[\sigma_{k-1}]}X(k-1)(F^{[\sigma_{k-1}]})^\mathrm{T}+V^{[\sigma_k]},
\end{aligned}\end{equation}
where $X(0)=0$, $0<\alpha<1$, and
\begin{equation}\begin{aligned}
V^{[\vartheta]}&\triangleq\alpha H^{[\vartheta]}Q(\vartheta)(H^{[\vartheta]})^\mathrm{T}+(1-\alpha)\Psi^{[\vartheta]}\\
&\ \ \ +K^{[\vartheta]}R(\vartheta)(K^{[\vartheta]})^\mathrm{T}.
\end{aligned}\end{equation}
\begin{equation}\begin{aligned}
\Psi_{i,i}^{[\vartheta]}\triangleq\big(I-K_{i}^{[\vartheta]}C_i(\vartheta)\big)Q(\vartheta)\big(I-K_{i}^{[\vartheta]}C_i(\vartheta)\big)^\mathrm{T},
\end{aligned}\end{equation}
\begin{equation}\begin{aligned}
\Psi^{[\vartheta]}\triangleq\Psi_{1,1}^{[\vartheta]}\oplus\Psi_{2,2}^{[\vartheta]}\oplus\cdots\oplus\Psi_{N,N}^{[\vartheta]}.
\end{aligned}\end{equation}
Then, one can derive that
\begin{align}
X(kT)&=F^{[T-1]}\cdots F^{[0]}X((k-1)T)(F^{[T-1]}\cdots F^{[0]})^\mathrm{T}\nonumber\\
&\ \ \ +V^{[0]}+\sum^{T-1}_{i=1}F^{[T-1]}\cdots F^{[T-i]}V^{[T-i]}\nonumber\\
&\ \ \ \times(F^{[T-1]}\cdots F^{[T-i]})^\mathrm{T}.
\end{align}
Note that
\begin{align}
&\ \ \ \mathrm{Det}(I-K_{i}^{[\vartheta]}C_i(\vartheta))\nonumber\\
&=\mathrm{Det}\big(I-\big(C_i(\vartheta)\bar{P}_{i}^{[\vartheta]}C_i(\vartheta)^\mathrm{T}+R_i(\vartheta)\big)^{-1}C_i(\vartheta)\bar{P}_{i}^{[\vartheta]}C_i(\vartheta)^\mathrm{T}\big)\nonumber\\
&=\mathrm{Det}\big(\big(C_i(\vartheta)\bar{P}_{i}^{[\vartheta]}C_i(\vartheta)^\mathrm{T}+R_i(\vartheta)\big)^{-1}R_i(\vartheta)\big)>0.
\end{align}
Thus, $I-K_{i}^{[\vartheta]}C_i^{[\vartheta]}$ is invertible for $\vartheta\in\mathbb{T}$. This implies that $V^{[0]}>0$. Meanwhile, one can verify by some tedious manipulation (c.f. Appendix) that
\begin{equation}\begin{aligned}
X(k)\leq \sum_{i\in\mathbb{N}}\mathrm{Tr}(P_{i}^{[\vartheta]}) I,\ \forall k\geq0.
\end{aligned}\end{equation}
Consequently, it follows from the Lyapunov theory that (72) holds. Moreover, (71) follows immediately from (72) and the equality $\rho(XY)=\rho(YX)$ (for $X$ and $Y$ with appropriate dimensions). The proof is completed.
\end{proof}

\begin{remark}
The convergence result provides theoretical supports for the implementation of the steady-state version of CI-based distributed Kalman filtering. After a sufficiently long duration, all estimator parameters converge to their steady states periodically. In this scenario, each node needs to store only $T$ parameter sets to run CI-based distributed Kalman filtering. Consequently, exchanging $P_{i}(k|k)$ is no longer necessary. This effectively reduces the computation and communication burden of the network.
\end{remark}

\section{Simulations}
In this section, we consider a target tracking system whose dynamic process can be represented as
\begin{equation}\begin{aligned}
x(k+1)=A(k)x(k)+\omega(k),\nonumber
\end{aligned}\end{equation}
where
\begin{equation}\begin{aligned}
&A(k)=\begin{bmatrix}
{1} & {\Delta t_k} & {0} & {0}\\
{0} & {1} & {0} & {0}\\
{0} & {0} & {1} & {\Delta t_k}\\
{0} & {0} & {0} & {1}
\end{bmatrix},\\
&Q(k)=S_w\begin{bmatrix}
{\frac{\Delta t_k^4}{3}} & {\frac{\Delta t_k^3}{2}} & {0} & {0}\\
{\frac{\Delta t_k^3}{2}} & {\Delta t_k^2} & {0} & {0}\\
{0} & {0} & {\frac{\Delta t_k^4}{3}} & {\frac{\Delta t_k^3}{2}}\\
{0} & {0} & {\frac{\Delta t_k^3}{2}} & {\Delta t_k^2}
\end{bmatrix},\nonumber
\end{aligned}\end{equation}
$\Delta t_k$ is the sampling period, and $S_w=0.2$ is the power spectral density. A network with $N=10$ nodes are utilized to observe the target, and their measurement equations are given by
\begin{equation}\begin{aligned}
z_i(k+1)=C_ix(k+1)+v_i(k+1),\nonumber
\end{aligned}\end{equation}
where the measurement matrices are given by
\begin{equation}\begin{aligned}
\left\{ \begin{array}{l}
C_1=C_3=C_6=\begin{bmatrix}
{1} & {0} & {0} & {0}\\
{0} & {0} & {1} & {0}
\end{bmatrix},\\
C_2=C_4=C_5=\begin{bmatrix}
{0} & {1} & {0} & {0}\\
{0} & {0} & {0} & {1}
\end{bmatrix},\\
C_7=C_8=\begin{bmatrix}
{0} & {1} & {0} & {0}
\end{bmatrix},\\
C_9=C_{10}=\begin{bmatrix}
{0} & {0} & {0} & {1}
\end{bmatrix}.
\end{array} \right.
\end{aligned}\end{equation}
The topology of the network are shown in Fig. 2, where the state of the edge $(i,j)$ is determined by
\begin{equation}\begin{aligned}
(i,j)\ \text{is}
\left\{ \begin{array}{l}
\text{activated at the $k$-th moment if}\ \cos(a_ik)\geq-0.5,\\
\text{closed at the $k$-th moment if}\ \cos(a_ik)<-0.5,\\
\end{array} \right.\nonumber
\end{aligned}\end{equation}
and $a_i$ is a constant. Moreover, the noise covariances $R_i(k)$ is given by
\begin{equation}\begin{aligned}
\left\{ \begin{array}{l}
R_1=R_3=R_6=I,\\
R_2=R_4=R_5=0.1I,\\
R_7=R_8=R_9=R_{10}=0.1I.
\end{array} \right.\nonumber
\end{aligned}\end{equation}
\begin{figure}
      \centering
      \includegraphics[scale=0.7]{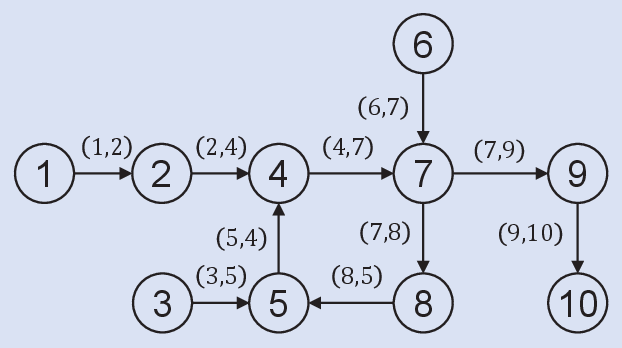}
      \caption{The topology of the sensor network.}
\end{figure}

We first verify our stability result in the general time-varying case. In this case, the parameters $\Delta t_k$ and $a_i$ are set as
\begin{equation}\begin{aligned}
\Delta t_k=1+0.1\sin(k),\ a_i=i.\nonumber
\end{aligned}\end{equation}
Then, one can verify that
\begin{equation}\begin{aligned}
&\mathbb{R}_1=\{1\},\ \mathbb{R}_2=\{1,2\},\ \mathbb{R}_3=\{3\},\ \mathbb{R}_6=\{6\},\\
&\mathbb{R}_4=\mathbb{R}_5=\mathbb{R}_7=\mathbb{R}_8=\mathbb{N}\setminus\{9,10\},\\
&\mathbb{R}_9=\mathbb{N}\setminus\{10\},\ \mathbb{R}_{10}=\mathbb{N}.
\end{aligned}\end{equation}
Clearly, the network is not jointly strongly connected (and not strongly connected). However, one can verify that $([C]_{\mathbb{R}_i},A(k))$ is uniformly observable for $i\in\mathbb{N}$. Then, we can see from Fig. 3 that the MSE of each node is bounded. Subsequently, we change $C_1$ in (80) to $C_1=[0,1,0,0;0,0,0,1]$, in which case $([C]_{\mathbb{R}_1},A(k))$ and $([C]_{\mathbb{R}_2},A(k))$ are no longer uniformly observable (but the other nodes remain). Figure 5 shows the MSE after $C_1$ being replaced, from which it can be seen that although nodes 1 and 2 diverge, the other nodes are still stable. These simulations are consistent with our stability result in Theorem 1.
\begin{figure}
      \centering
      \includegraphics[scale=0.6]{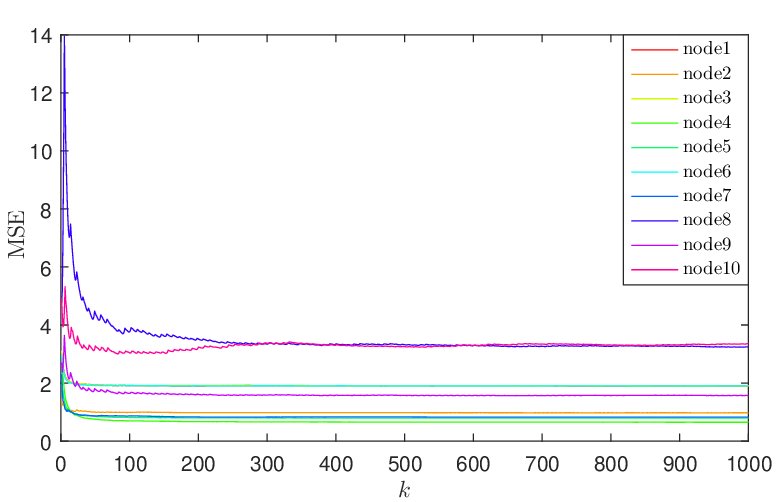}
      \caption{The MSE of CI-based distributed Kalman filtering in the general time-varying system.}
\end{figure}
\begin{figure}
      \centering
      \includegraphics[scale=0.6]{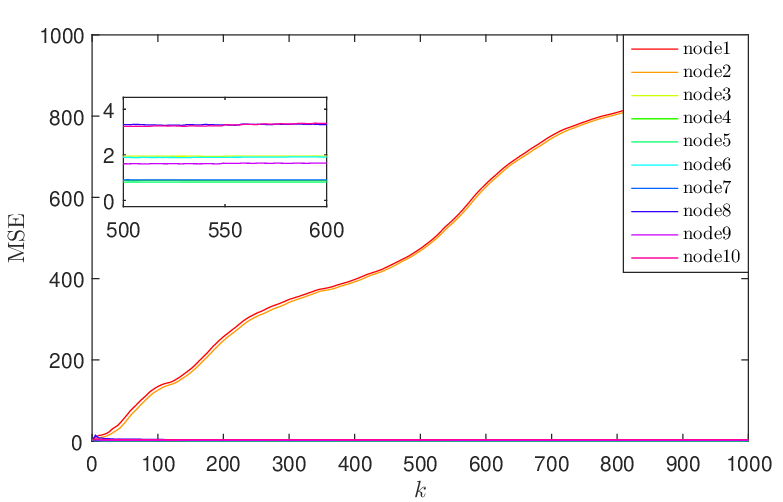}
      \caption{The MSE of CI-based distributed Kalman filtering in the general time-varying system ($C_1$ in (80) is replaced by $C_1=[0,1,0,0;0,0,0,1]$).}
\end{figure}

We then verify our convergence result in the periodic time-varying system. In this case, the parameters $\Delta t_k$ and $a_i$ are set as
\begin{equation}\begin{aligned}
\Delta t_k=1+0.1\cos(\pi k),\nonumber
\end{aligned}\end{equation}
\begin{equation}\begin{aligned}
a_1=a_2=a_3=a_4=a_5=\pi,\nonumber
\end{aligned}\end{equation}
\begin{equation}\begin{aligned}
a_6=a_7=a_8=a_9=a_{10}=\frac{\pi}{2}.\nonumber
\end{aligned}\end{equation}
The period of the whole system is $T=4$. Moreover, $\mathbb{R}_i$ is also given by (81). Fig. 4 shows the trace of the error covariance $\hat{P}_i(k|k)$, from which one can see that $\mathrm{Tr}(\hat{P}_i(k|k))$ converges with period 4, conforming to the convergence result in Theorem 2.
\begin{figure}
      \centering
      \includegraphics[scale=0.6]{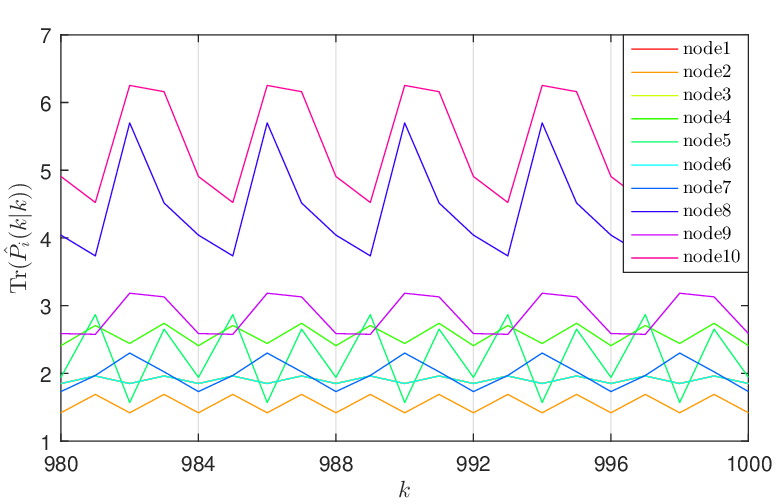}
      \caption{The trace of the error covariance of CI-based distributed Kalman filtering in the periodic time-varying system.}
\end{figure}

\section{Conclusion}
This paper investigated the stability of CI-based distributed Kalman filtering over the time-varying directed graph. We showed that the stability of one node is only related to the observability of those nodes that can reach it uniformly. Particularly, when the directed graph is strongly connected, our stability condition reduces to the existing stability condition, namely, that all nodes are collectively observable. Based on the stability result, we proved that CI-based distributed Kalman filtering converges periodically for any initial condition if the system is periodically time-varying. This implies that we can implement the steady-state version of CI-based distributed Kalman filtering to reduce the communication burden.

\section*{Appendix}
\setcounter{lemma}{0}
\renewcommand{\thelemma}{A.\arabic{lemma}}
\begin{lemma}
If $j\in\mathbb{R}_i$, then there exist $0<s_i<\infty$ and a path $(j_0,j_1)$, $(j_1,j_2)$, $\cdots$, $(j_{s_i},j_{s_i+1})$ from node $j$ to node $i$ over the interval $[k,k+s_i]$ for all $k\geq t$, where $j_0=j$, $j_{s_i+1}=i$ and $(j_\iota,j_{\iota+1})\in\mathbb{E}(k+\iota)$.
\end{lemma}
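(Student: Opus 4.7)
The plan is to convert the union-graph path guaranteed by joint reachability into a time-respecting journey whose $\iota$-th edge lies in $\mathbb{E}(k+\iota)$, by exploiting the self-loops that are implicit in CI-based distributed Kalman filtering (each node fuses its own local estimate in Step~3, so $i\in\mathbb{I}_i(k)$ and hence $\pi_{ii}(k)\geq\pi_{l,ii}>0$ for all $k$, which forces $(i,i)\in\mathbb{E}(k)$).

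I would fix $k$ and $j\in\mathbb{R}_i$ and define the forward-reachable set
\[
R(L,k)\triangleq\{v\in\mathbb{N}:\text{there is a journey from $j$ to $v$ of length $\leq L$ starting at time $k$}\},
\]
where a journey of length $\ell$ means a sequence $(u_0,u_1),\ldots,(u_{\ell-1},u_\ell)$ with $u_0=j$ and $(u_\iota,u_{\iota+1})\in\mathbb{E}(k+\iota)$. The omnipresent self-loops imply $R(L,k)\subseteq R(L+1,k)$ and $j\in R(L,k)$ for every $L\geq0$. The lemma then reduces to exhibiting a finite $s_i$, independent of $k$, with $i\in R(s_i+1,k)$; padding the resulting journey by self-loops at $i$ delivers a sequence of the required exact length $s_i+1$.

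The crux is the monotone-growth step: \emph{if $i\notin R(L,k)$, then $R(L+\mu+1,k)\supsetneq R(L,k)$}, where $\mu$ is the joint-reachability parameter for the pair $(j,i)$. Applying joint reachability at time $k+L$ produces a path $j=v_0\to\cdots\to v_m=i$ on $\bigcup_{\iota=k+L}^{k+L+\mu}\mathbb{G}(\iota)$. Since $j\in R(L,k)$ but $i\notin R(L,k)$, some consecutive pair on this path satisfies $u\triangleq v_{\ell^*}\in R(L,k)$ and $u'\triangleq v_{\ell^*+1}\notin R(L,k)$, with $(u,u')\in\mathbb{E}(t)$ for some $t\in[k+L,k+L+\mu]$. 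Padding the journey that reaches $u$ by self-loops at $u$ up to time $t$ and then crossing $(u,u')$ places $u'$ in $R(t+1-k,k)\subseteq R(L+\mu+1,k)$, strictly enlarging the set.

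Because $|R(L,k)|\leq N$, strict growth can occur at most $N-1$ times, so iterating the claim from $L=0$ (where $R(0,k)=\{j\}$) yields $i\in R((N-1)(\mu+1),k)$. Setting $s_i+1\triangleq(N-1)(\mu+1)$ and taking the maximum over $j\in\mathbb{R}_i$ removes dependence on $j$ and on $k$. The main obstacle I anticipate is legitimizing the implicit self-loop hypothesis on which both the monotonicity of $R(L,k)$ and the time-padding maneuver depend: it is not listed explicitly in Assumption~1 but is forced by the structure of the CI fusion rule (9)--(10), since a node must include its own local estimate. Once this point is granted, the combinatorial growth argument above is routine.
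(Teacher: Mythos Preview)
Your argument is correct and is a genuinely different route from the paper's. The paper argues by pigeonhole on the finite set of simple $j\to i$ paths: it splits the window $[k+1,k+M(N-1)\mu]$ into $M(N-1)$ blocks of length $\mu$ (with $M=\sum_{\iota=0}^{N-2}P(N,\iota)$ the number of simple paths), observes that joint reachability yields \emph{some} simple $j\to i$ path in each block, and concludes that one fixed path $\mathcal{P}'$ recurs in at least $N-1$ blocks; since $|\mathcal{P}'|\leq N-1$, its edges can then be traversed in order, one per block. Your approach instead tracks the time-respecting forward-reachable set $R(L,k)$ and shows it strictly grows every $\mu+1$ steps as long as $i\notin R(L,k)$.

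Both arguments tacitly need the self-loop padding you flagged (the paper uses it just as you do, to stitch block-wise edges into a journey with one hop per time step, but does not say so). Your version has two advantages: it is the standard reachable-set growth argument for time-varying graphs and is easier to verify, and it yields the polynomial bound $s_i+1=(N-1)(\mu+1)$ rather than the paper's $s_i\sim M(N-1)\mu$, which is factorial in $N$. The paper's pigeonhole idea buys nothing extra here, so your route is at least as good; just make the self-loop assumption explicit (it is indeed implied by Step~3 of the algorithm together with Assumption~1, since node $i$ must include its own pair in the CI fusion for $\hat P_i(k|k)$ to be defined).
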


\begin{proof}
For a graph with $N$ nodes, we know that there are at most $M=\sum^{N-2}_{\iota=0}P(N,\iota)$ unduplicated paths from node $j$ to node $i$, where $i\neq j$ and $P(N,\iota)=N!/(N-\iota)!$ is the permutation. We denote them as $\mathcal{P}=\{\mathcal{P}_1,\mathcal{P}_2,\cdots,\mathcal{P}_M\}$. Here, $\mathcal{P}_\iota$ is an unduplicated path implies that a node is not visited twice on the path.

Define $\mathbb{G}(k,k+\kappa)\triangleq\bigcup^{k+\kappa}_{\iota=k}\mathbb{G}(\iota)$. By the definition of $\mathbb{R}_i$ we know that there exists an unduplicated path $\mathcal{P}_{\varsigma_{k+1}}\in\mathcal{P}$ from $j$ to $i$ on the graph $\mathbb{G}(k+1,k+\mu)=\bigcup^{k+\mu}_{\iota=k+1}\mathbb{G}(\iota)$ for $k\geq0$. Consider the joint graph $\mathbb{G}(k+1,k+M(N-1)\mu)=\bigcup^{M(N-1)}_{\iota=1}\mathbb{G}(k+(\iota-1)\mu+1,k+\iota\mu)$. Since $|\mathcal{P}|=M$, there must exist $\mathcal{P}^\prime\in\mathcal{P}$ and $1\leq\iota_1,\iota_2,\cdots,\iota_{N-1}\leq M(N-1)$ such that $\mathcal{P}^\prime=\mathcal{P}_{k+(\iota_\kappa-1)\mu+1}$ for all $\kappa=1,2,\cdots,N-1$. Since the path $\mathcal{P}^\prime$ occurs (at least) $N-1$ times and the length of any unduplicated path is less than or equal to $N-1$, the assertion holds.
\end{proof}

\begin{lemma}
Consider the matrix difference equation (34). If $X(t)>0$ and $\hat{P}(t|t)=X(t)^{-1}$, then $X(k)>0$ and $X(k)^{-1}=\hat{P}(k|k)$ for $k\geq t$. Moreover, for $k\geq t$, the difference equation (34) can be reformulated as
\begin{equation}\begin{aligned}
X(k+1)=\sum_{i\in\mathbb{N}}&S_i(\sigma_k)^\mathrm{T}\big((\bar{A}(\sigma_k)X(k)^{-1}\bar{A}(\sigma_k)^\mathrm{T}\\
&+I\otimes Q(\sigma_k))^{-1}+\bar{C}(\sigma_{k+1})^\mathrm{T}\\
&\times R(\sigma_{k+1})^{-1}\bar{C}(\sigma_{k+1})\big)S_i(\sigma_k).
\end{aligned}\end{equation}
\end{lemma}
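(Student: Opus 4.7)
The plan is to establish the reformulation (82) via a Sherman--Morrison--Woodbury-style identity, and then propagate the identification $X(k)^{-1}=\hat{P}(k|k)$ by induction on $k$.

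First I would prove the matrix identity
\[
\bar{A}(\vartheta)^{-\mathrm{T}}X\bar{A}(\vartheta)^{-1}-\bar{A}(\vartheta)^{-\mathrm{T}}X(X+\bar{Q}(\vartheta))^{-1}X\bar{A}(\vartheta)^{-1}=\big(\bar{A}(\vartheta)X^{-1}\bar{A}(\vartheta)^{\mathrm{T}}+I\otimes Q(\vartheta)\big)^{-1},
\]
valid for any symmetric $X>0$. This follows from the standard fact $X-X(X+\bar{Q})^{-1}X=(X^{-1}+\bar{Q}^{-1})^{-1}$ combined with the substitution $\bar{Q}(\vartheta)^{-1}=\bar{A}(\vartheta)^{-1}(I\otimes Q(\vartheta))\bar{A}(\vartheta)^{-\mathrm{T}}$ implied by (28), after which inverting both sides completes the manipulation. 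Assumption~1 guarantees $A(\vartheta)$, and hence $\bar{A}(\vartheta)$, is invertible, so every inverse makes sense. Substituting this identity into the definition (27) of $\mathcal{L}_{\vartheta}$ immediately produces the reformulated recursion (82).

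Next I would carry out the induction. The base case $X(t)>0$ with $X(t)^{-1}=\hat{P}(t|t)=\bigoplus_{j}\hat{P}_{j}(t|t)$ is block-diagonal by hypothesis. Assuming $X(k)>0$ is block-diagonal with $X_j(k)=\hat{P}_j(k|k)^{-1}$, I would compute $\mathcal{L}_{\sigma_k}(X(k))$ block by block: the first term in (82) becomes $\bigoplus_{j}\big(A(\sigma_k)\hat{P}_j(k|k)A(\sigma_k)^{\mathrm{T}}+Q(\sigma_k)\big)^{-1}=\bigoplus_{j}P_j(k+1|k)^{-1}$ by (5), and adding $\bar{C}(\sigma_{k+1})^{\mathrm{T}}R(\sigma_{k+1})^{-1}\bar{C}(\sigma_{k+1})=\bigoplus_{j}C_j(\sigma_{k+1})^{\mathrm{T}}R_j(\sigma_{k+1})^{-1}C_j(\sigma_{k+1})$ yields $\bigoplus_{j}P_j(k+1|k+1)^{-1}$ by the Woodbury form (17). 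Finally, from the definition (29) of $S_i(\sigma_k)$, for any block-diagonal $M=\bigoplus_{j}M_j$ the operator $\sum_{i}S_i(\sigma_k)^{\mathrm{T}}MS_i(\sigma_k)$ produces a block-diagonal output whose $i$-th block equals $\sum_{j}\pi_{ji}(\sigma_{k+1})M_j$; matching this with the CI fusion rule (10) identifies $X(k+1)=\bigoplus_{i}\hat{P}_{i}(k+1|k+1)^{-1}$, which is block-diagonal and positive definite, closing the induction.

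The main obstacle is the bookkeeping around the operator $\sum_{i}S_i(\vartheta)^{\mathrm{T}}(\cdot)S_i(\vartheta)$: one has to verify that $\bar{\Pi}(k)$, viewed as the entry-wise square root of the CI weight matrix and post-multiplied by the $i$-th block projection, reproduces exactly the weighted sum $\sum_{j}\pi_{ji}(k)(\cdot)$ when acting on block-diagonal inputs, and that block diagonality is preserved by a single step of $\mathcal{L}_{\sigma_k}$. Once the Woodbury identity and this fusion-operator interpretation are in place, positive definiteness and the identification with $\hat{P}(k|k)$ both propagate automatically.
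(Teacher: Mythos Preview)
Your proposal is correct and follows essentially the same route as the paper: induction on $k$, the Sherman--Morrison--Woodbury identity to pass from (27) to (82), and a direct comparison with the CI recursion (18) to identify $X(k+1)^{-1}=\hat{P}(k+1|k+1)$ and propagate positive definiteness. Your write-up is considerably more explicit than the paper's (which compresses the whole argument into three sentences), in particular your block-diagonal bookkeeping for $\sum_i S_i(\sigma_k)^{\mathrm{T}}(\cdot)S_i(\sigma_k)$ is exactly the verification the paper leaves to the reader.
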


\begin{proof}
This lemma will be proved by an induction. At the initial moment, it is trivial by the assumption that $X(t)>0$ and $\hat{P}(t|t)=X(t)^{-1}$. Suppose that $X(k)>0$ and $\hat{P}(k|k)=X(k)^{-1}$ for some $k>t$. Then, one can derive from the Sherman–Morrison–Woodbury formula that (82) holds.
Since $Q(k)$ is invertible, one has $X(k+1)>0$. Then, a comparison of (18) and (82) shows that $X(k+1)^{-1}=\hat{P}(k+1|k+1)$. The proof is completed.
\end{proof}

\begin{lemma}
Consider the operator $\hat{\mathcal{L}}_\vartheta(\cdot)$ defined in (26), then for any $\vartheta\in\mathbb{T}$, $\hat{\mathcal{L}}_\vartheta(X)\geq\hat{\mathcal{L}}_\vartheta(Y)$ if $X\geq Y$.
\end{lemma}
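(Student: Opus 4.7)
The plan is to leverage the fact that $\hat{\mathcal{L}}_\vartheta$ is a finite composition of the base operators $\mathcal{L}_0,\mathcal{L}_1,\ldots,\mathcal{L}_{T-1}$. Since composition of monotone non-decreasing maps is monotone, it suffices to prove that each $\mathcal{L}_j$ defined in (27) is monotone on positive semi-definite matrices. Fixing $j$, the only $X$-dependent part of $\mathcal{L}_j(X)$ is $\bar{A}(j)^{-\mathrm{T}}\bigl[X - X(X+\bar{Q}(j))^{-1}X\bigr]\bar{A}(j)^{-1}$; the terms involving $\bar{C}$ and $R$ are constants, and the outer sandwich $\sum_i S_i(j)^{\mathrm{T}}(\cdot)S_i(j)$ preserves the Loewner order. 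Hence I reduce the lemma to showing that the Riccati-type map $\Phi(X)\triangleq X - X(X+\bar{Q})^{-1}X$ is monotone in $X\geq 0$ whenever $\bar{Q}>0$ (which is ensured by Assumption 1 together with the definition of $\bar{Q}$ in (28)).

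To make monotonicity of $\Phi$ manifest, I would first derive the dual identity
\[\Phi(X) = X - X(X+\bar{Q})^{-1}X = \bar{Q} - \bar{Q}(X+\bar{Q})^{-1}\bar{Q},\]
which holds for every $X\geq 0$ (only invertibility of $X+\bar{Q}$ is used). This identity can be checked by setting $M=X+\bar{Q}$ and expanding the right-hand side as $(M-X)-(M-X)M^{-1}(M-X)$, which collapses to $X-XM^{-1}X$ after cancellation. With this rewriting the argument is immediate: if $X\geq Y\geq 0$, then $X+\bar{Q}\geq Y+\bar{Q}>0$ yields $(X+\bar{Q})^{-1}\leq(Y+\bar{Q})^{-1}$, and sandwiching by the symmetric matrix $\bar{Q}$ gives $\bar{Q}(X+\bar{Q})^{-1}\bar{Q}\leq\bar{Q}(Y+\bar{Q})^{-1}\bar{Q}$, whence $\Phi(X)\geq\Phi(Y)$.

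The only delicate step is locating the correct algebraic rewrite: in its original form $X-X(X+\bar{Q})^{-1}X$ the monotonicity is not visually obvious, whereas the symmetric dual form (equivalently the parallel-sum representation $(X^{-1}+\bar{Q}^{-1})^{-1}$ on the positive definite cone) reduces the claim to a one-line order-reversal of matrix inversion. Once $\Phi$ is known to be monotone, pre- and post-multiplication by the fixed matrices $\bar{A}(j)^{-\mathrm{T}}$, $\bar{A}(j)^{-1}$, $S_i(j)^{\mathrm{T}}$, $S_i(j)$ and summation over $i\in\mathbb{N}$ lift the monotonicity from $\Phi$ to $\mathcal{L}_j$, and iterating across $j=0,\ldots,T-1$ according to the definition (26) yields the claim for $\hat{\mathcal{L}}_\vartheta$.
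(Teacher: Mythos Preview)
Your proposal is correct and follows essentially the same route as the paper: reduce $\hat{\mathcal{L}}_\vartheta$ to a composition of the $\mathcal{L}_j$'s and invoke monotonicity of the Riccati map on each factor. The only difference is that the paper simply cites the monotonicity of the discrete-time Riccati operator from \cite{Sinopoli1333199}, whereas you supply a self-contained proof via the dual identity $X-X(X+\bar{Q})^{-1}X=\bar{Q}-\bar{Q}(X+\bar{Q})^{-1}\bar{Q}$; this buys independence from the external reference at negligible extra cost.
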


\begin{proof}
If $X\geq Y$, it follows from the monotonicity of the discrete-time Riccati equation \cite{Sinopoli1333199} that
\begin{equation}\begin{aligned}
\mathcal{L}_\vartheta(X)\geq\mathcal{L}_\vartheta(Y),\ \forall \vartheta\in\{0,1,\cdots,T-1\}.\nonumber
\end{aligned}\end{equation}
Substituting this inequality into the definition (26) completes the proof.
\end{proof}

\begin{lemma}
Consider the operator $\hat{\mathcal{L}}_\vartheta(\cdot)$ defined in (26). Let $Y_{\vartheta,1}(k+1)=\hat{\mathcal{L}}_\vartheta(Y_{\vartheta,1}(k))$ and $Y_{\vartheta,2}(k+1)=\hat{\mathcal{L}}_\vartheta(Y_{\vartheta,2}(k))$ with $Y_{\vartheta,1}(0)\geq Y_{\vartheta,2}(0)$, where $\vartheta\in\{0,1,\cdots,T-1\}$. Then, $Y_{\vartheta,1}(k)\geq Y_{\vartheta,2}(k)$ for all $k\geq0$ and $\vartheta\in\mathbb{T}$.
\end{lemma}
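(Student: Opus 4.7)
The plan is to prove Lemma A.4 by a straightforward induction on $k$, using Lemma A.3 (one-step monotonicity of $\hat{\mathcal{L}}_\vartheta$) as the key building block. Since the assertion fixes a particular $\vartheta \in \mathbb{T}$ and only iterates the single operator $\hat{\mathcal{L}}_\vartheta$, no further structural analysis is needed once monotonicity of one application is available.

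For the base case $k=0$, the inequality $Y_{\vartheta,1}(0) \geq Y_{\vartheta,2}(0)$ is exactly the hypothesis of the lemma. For the inductive step, I would assume $Y_{\vartheta,1}(k) \geq Y_{\vartheta,2}(k)$ holds for some $k \geq 0$ and then apply Lemma A.3 to obtain
\begin{equation}\begin{aligned}
Y_{\vartheta,1}(k+1) &= \hat{\mathcal{L}}_\vartheta(Y_{\vartheta,1}(k)) \\
&\geq \hat{\mathcal{L}}_\vartheta(Y_{\vartheta,2}(k)) = Y_{\vartheta,2}(k+1),\nonumber
\end{aligned}\end{equation}
closing the induction. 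Since $\vartheta$ was fixed but arbitrary in $\mathbb{T}$, the conclusion holds for all $\vartheta \in \mathbb{T}$ as well.

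There is essentially no real obstacle in this proof; the genuine technical content was already absorbed into Lemma A.3, which in turn relied on the standard monotonicity of the discrete-time Riccati operator (from \cite{Sinopoli1333199}) composed $T$ times in the definition (26). The only thing to be careful about is to emphasize that the iteration uses the \emph{same} operator $\hat{\mathcal{L}}_\vartheta$ at every step (as opposed to alternating through $\mathcal{L}_0, \mathcal{L}_1, \ldots, \mathcal{L}_{T-1}$), so the induction compares iterates of a fixed monotone map rather than a sequence of different ones. This is precisely the form that is needed in the proof of Theorem 2, where one invokes this lemma together with the sandwich inequality $0 \leq X(0) \leq X(0) + X_0$ and the two prior convergence results (41) and (59) to conclude convergence from arbitrary nonnegative initial conditions.
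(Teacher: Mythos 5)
Your proof is correct and matches the paper's own argument exactly: induction on $k$ with the base case given by the hypothesis and the inductive step supplied by the monotonicity of $\hat{\mathcal{L}}_\vartheta$ established in Lemma A.3. No gaps or differences worth noting.
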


\begin{proof}
This lemma will be proved by an induction. It is trivial that $Y_{\vartheta,1}(0)\geq Y_{\vartheta,2}(0)$ for $\vartheta\in\mathbb{T}$. Suppose that $Y_{\vartheta,1}(k-1)\geq Y_{\vartheta,2}(k-1)$ for $\vartheta\in\mathbb{T}$. Then, one can derive from Lemma A.3 that
\begin{equation}\begin{aligned}
Y_{\vartheta,1}(k)&=\hat{\mathcal{L}}_\vartheta(Y_{\vartheta,1}(k-1))\\
&\geq \hat{\mathcal{L}}_\vartheta(Y_{\vartheta,2}(k-1))=Y_{\vartheta,2}(k).
\end{aligned}\end{equation}
The proof is completed.
\end{proof}

\begin{lemma}
Let $0\leq w_{i}\leq 1$ and $\sum^{N}_{i=1}w_i=1$. Let
\begin{equation}\begin{aligned}
X=\begin{bmatrix}
{X_{1,1}} & {X_{1,2}} & {\cdots} & {X_{1,N}}\\
{X_{2,1}} & {X_{2,2}} & {\cdots} & {X_{2,N}}\\
{\vdots} & {\vdots} & {\ddots} & {\vdots}\\
{X_{N,1}} & {X_{N,2}} & {\cdots} & {X_{N,N}}
\end{bmatrix}\geq0,\nonumber
\end{aligned}\end{equation}
$Y_{i}>0$ and $Y_{i}\geq X_{i,i}$ for $i=1,2,\cdots,N$. Let $Y=(\sum^N_{i=1}w_iY_i^{-1})^{-1}$. Then, one has
\begin{equation}\begin{aligned}
Y\geq Y(\sum^N_{i=1}\sum^N_{j=1}w_iw_jY_i^{-1}X_{i,j}Y_j^{-1})Y.
\end{aligned}\end{equation}
\end{lemma}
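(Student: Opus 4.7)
I would reduce the desired operator inequality to a clean rank-one form via a Cholesky-type factorization of $X$, and then finish with a standard Jensen-type bound together with the entrywise hypothesis $X_{i,i}\leq Y_i$. First, since $Y>0$, I would reformulate the claim in the equivalent form $\sum_{i}w_iY_i^{-1}\geq\sum_{i,j}w_iw_jY_i^{-1}X_{i,j}Y_j^{-1}$: the inequality $Y\geq YMY$ with $M\triangleq\sum_{i,j}w_iw_jY_i^{-1}X_{i,j}Y_j^{-1}$ is equivalent to $M\leq Y^{-1}$, and by definition of $Y$ we have $Y^{-1}=\sum_{i}w_iY_i^{-1}$.

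Next, I would exploit $X\geq 0$ by writing $X=LL^\mathrm{T}$ and partitioning $L=[L_1;L_2;\cdots;L_N]$ conformally with the block structure of $X$, so that $X_{i,j}=L_iL_j^\mathrm{T}$ and in particular $X_{i,i}=L_iL_i^\mathrm{T}$. This collapses the double sum into a single outer product,
\[
M=\sum_{i,j}w_iw_jY_i^{-1}L_iL_j^\mathrm{T}Y_j^{-1}=\tilde{M}\tilde{M}^\mathrm{T},\ \ \tilde{M}\triangleq\sum_{i=1}^{N}w_iY_i^{-1}L_i,
\]
so the task reduces to showing $\sum_i w_iY_i^{-1}\geq\tilde{M}\tilde{M}^\mathrm{T}$.

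Third, I would apply the operator Jensen inequality $(\sum_i w_ia_i)(\sum_j w_ja_j)^\mathrm{T}\leq\sum_i w_ia_ia_i^\mathrm{T}$, which follows immediately from $\sum_{i,j}w_iw_j(a_i-a_j)(a_i-a_j)^\mathrm{T}\geq 0$ and $\sum_i w_i=1$, with $a_i=Y_i^{-1}L_i$ to obtain $\tilde{M}\tilde{M}^\mathrm{T}\leq\sum_i w_iY_i^{-1}L_iL_i^\mathrm{T}Y_i^{-1}=\sum_i w_iY_i^{-1}X_{i,i}Y_i^{-1}$. Finally, $X_{i,i}\leq Y_i$ gives $Y_i-X_{i,i}\geq 0$, and congruence by $Y_i^{-1}$ yields $Y_i^{-1}X_{i,i}Y_i^{-1}\leq Y_i^{-1}$; weighting by $w_i$ and summing produces $\sum_i w_iY_i^{-1}X_{i,i}Y_i^{-1}\leq\sum_i w_iY_i^{-1}=Y^{-1}$. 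Chaining the two bounds yields $\tilde{M}\tilde{M}^\mathrm{T}\leq Y^{-1}$, which is the desired inequality.

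The main obstacle to anticipate is spotting the factorization $X=LL^\mathrm{T}$ as the right first step: a direct expansion keeps the indices $i,j$ entangled through $X_{i,j}$ and does not obviously submit to any scalar convexity argument. Once $X$ is replaced by an outer product, the Cauchy--Schwarz/Jensen step separates the sums, and the only remaining input is the entrywise hypothesis $X_{i,i}\leq Y_i$.
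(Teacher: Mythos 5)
Your proof is correct, and it is complete. It follows a genuinely different (and more self-contained) route than the paper. The paper's proof invokes the contraction representation $X_{i,j}=\sqrt{X_{i,i}}\,Z_{i,j}\sqrt{X_{j,j}}$ (Theorem 7.7.9 of Horn--Johnson) to establish the pairwise bound $Y_i^{-1}X_{i,j}Y_j^{-1}+Y_j^{-1}X_{j,i}Y_i^{-1}\leq Y_i^{-1}X_{i,i}Y_i^{-1}+Y_j^{-1}X_{j,j}Y_j^{-1}$, and then defers the remaining weighted-summation step to an argument ``similar to Niehsen'' without giving details. You instead factor $X=LL^{\mathrm{T}}$ globally, collapse the double sum to a single outer product $\tilde{M}\tilde{M}^{\mathrm{T}}$ with $\tilde{M}=\sum_i w_iY_i^{-1}L_i$, and apply the operator Jensen inequality $\big(\sum_i w_ia_i\big)\big(\sum_i w_ia_i\big)^{\mathrm{T}}\leq\sum_i w_ia_ia_i^{\mathrm{T}}$; both routes then converge to the same key intermediate bound $\sum_{i,j}w_iw_jY_i^{-1}X_{i,j}Y_j^{-1}\leq\sum_i w_iY_i^{-1}X_{i,i}Y_i^{-1}$, after which $X_{i,i}\leq Y_i$ and the congruence reduction $Y\geq YMY\Leftrightarrow M\leq Y^{-1}=\sum_i w_iY_i^{-1}$ finish the argument identically. (Indeed, the paper's pairwise inequality, weighted by $w_iw_j$ and summed, yields exactly the same intermediate bound, so your Jensen step is a compact substitute for the contraction lemma plus the omitted Niehsen-style summation.) What your approach buys is a fully explicit, elementary proof with no appeal to the contraction representation or to an external reference; what the paper's approach buys is alignment with the fast covariance intersection literature it cites. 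All your individual steps check out: the equivalence via congruence by $Y^{-1}$ is valid since $Y>0$, the expansion $\sum_{i,j}w_iw_j(a_i-a_j)(a_i-a_j)^{\mathrm{T}}\geq0$ together with $\sum_i w_i=1$ gives the Jensen bound, and $X_{i,i}\leq Y_i$ gives $Y_i^{-1}X_{i,i}Y_i^{-1}\leq Y_i^{-1}$ by congruence with $Y_i^{-1}$.
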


\begin{proof}
According to Theorem 7.7.9 in \cite{horn2012matrix}, one can reformulate $X_{i,j}$ as
\begin{equation}\begin{aligned}
X_{i,j}=\sqrt{X_{i,i}}Z_{i,j}\sqrt{X_{j,j}},\nonumber
\end{aligned}\end{equation}
where $Z_{i,j}$ is a contraction. Then, one can derive that
\begin{align}
&Y_i^{-1}X_{i,i}Y_i^{-1}+Y_j^{-1}X_{j,j}Y_j^{-1}\nonumber\\
&-Y_i^{-1}X_{i,j}Y_j^{-1}-Y_j^{-1}X_{j,i}Y_i^{-1}\nonumber\\
=&Y_i^{-1}\sqrt{X_{i,i}}\sqrt{X_{i,i}}Y_i^{-1}+Y_j^{-1}\sqrt{X_{j,j}}\sqrt{X_{j,j}}Y_j^{-1}\nonumber\\
&-Y_i^{-1}\sqrt{X_{i,i}}Z_{i,j}\sqrt{X_{j,j}}Y_j^{-1}-Y_j^{-1}\sqrt{X_{j,j}}Z_{i,j}^T\sqrt{X_{i,i}}Y_i^{-1}\nonumber\\
\geq&(Y_i^{-1}\sqrt{X_{i,i}}Z_{i,j}-Y_j^{-1}\sqrt{X_{j,j}})\nonumber\\
&\times (Y_i^{-1}\sqrt{X_{i,i}}Z_{i,j}-Y_j^{-1}\sqrt{X_{j,j}})^T\nonumber\\
\geq&0,
\end{align}
where the first inequality follows from the formula $XY^T+YX^T\leq XX^T+YY^T$ ($X$ and $Y$ have appropriate dimensions). With (85), the lemma can be proved by an argument similar to \cite{Niehsen1020907}. The details are omitted for brevity.
\end{proof}

\begin{proof}[The proof of (53)]
Based on the reformulation (45), one has
\begin{equation}\begin{aligned}
X(k+1)=\mathcal{F}_{k}(X(k)),
\end{aligned}\end{equation}
where
\begin{equation}\begin{aligned}
\mathcal{F}_{k}(X)\triangleq\mathcal{T}_{k}(X)+\Delta(k),
\end{aligned}\end{equation}
\begin{equation}\begin{aligned}
\mathcal{T}_{k}(X)\triangleq\sum^N_{i=1}&S_i(\sigma_k)^\mathrm{T}\big(\bar{A}(\sigma_k)^{-\mathrm{T}}-K(k)\big)X\\
&\times\big(\bar{A}(\sigma_k)^{-\mathrm{T}}-K(k)\big)^\mathrm{T}S_i(\sigma_k),
\end{aligned}\end{equation}
\begin{equation}\begin{aligned}
\Delta(k)\triangleq\sum^N_{i=1}&S_i(\sigma_k)^\mathrm{T}\big(K(k)\bar{Q}(\sigma_k)K^\mathrm{T}(k)+\bar{C}(\sigma_{k+1})^\mathrm{T}\\
&\times R(\sigma_{k+1})^{-1}\bar{C}(\sigma_{k+1})\big)S_i(\sigma_k),
\end{aligned}\end{equation}
\begin{equation}\begin{aligned}
K(k)\triangleq\bar{A}(\sigma_k)^{-\mathrm{T}}X(k)\big(X(k)+\bar{Q}(\sigma_k)\big)^{-1}.
\end{aligned}\end{equation}
Note that $\sigma_{iT+\vartheta}=\vartheta$ for $\vartheta\in\mathbb{T}$. With this fact, it follows from the optimality of the Kalman gain \cite{Sinopoli1333199} that for any $0\leq X(0)<\infty I$,
\begin{equation}\begin{aligned}
&\ \ \ X(iT+\vartheta;0,X(0))\\
&=\mathcal{F}_{iT+\vartheta-1}(X(iT+\vartheta-1;0,X(0)))\\
&\leq \mathcal{G}_{\vartheta-1}(X(iT+\vartheta-1;0,X(0))),\ \forall\vartheta\in\mathbb{T}\setminus\{0\},
\end{aligned}\end{equation}
\begin{equation}\begin{aligned}
&\ \ \ X(iT;0,X(0))\\
&=\mathcal{F}_{iT-1}(X(iT-1;0,X(0)))\\
&\leq \mathcal{G}_{T-1}(X(iT-1;0,X(0))).
\end{aligned}\end{equation}
Recursing (91) and (92) yields
\begin{equation}\begin{aligned}
X(iT+\vartheta;0,X(0))\leq\hat{\mathcal{G}}_\vartheta(X((i-1)T+\vartheta;0,X(0))).
\end{aligned}\end{equation}
Consequently, it follows from the fact that $\hat{\mathcal{S}}_\vartheta$ is a linear operator that
\begin{equation}\begin{aligned}
&X(iT+\vartheta;0,X(0))-X_\vartheta\\
\leq&\hat{\mathcal{G}}_\vartheta(X((i-1)T+\vartheta;0,X(0)))-\hat{\mathcal{G}}_\vartheta(X_\vartheta)\\
=&\hat{\mathcal{S}}_\vartheta(X((i-1)T+\vartheta;0,X(0)))-\hat{\mathcal{S}}_\vartheta(X_\vartheta)\\
=&\hat{\mathcal{S}}_\vartheta(X((i-1)T+\vartheta;0,X(0))-X_\vartheta).
\end{aligned}\end{equation}
Setting $X(0)=X_0+\Upsilon$ and recursing the inequality (94) yields (53). The proof is completed.
\end{proof}

\begin{proof}[The proof of $\Upsilon_\vartheta>0$] It follows from (49) that $\Upsilon_\vartheta>0$ for $\vartheta\in\mathbb{T}$ if $X_\vartheta>0$ for $\vartheta\in\mathbb{T}$. Thus, it remains to show that $X_\vartheta>0$ for $\vartheta\in\mathbb{T}$.

Let $\hat{P}(0|0)=bI$, one knows that $X(k;0,aI)^{-1}=\hat{P}(k|k)$ for $k\geq0$ if $0<a=\frac{1}{b}<\infty$. It follows from Theorem 1 that there exists $K>0$ such that $\hat{P}(k|k)\leq p_u I$ for $k\geq K$ and $0<b<\infty$, where $p_u=\max\{p_{u,1},p_{u,2},\cdots,p_{u,N}\}$. This implies that $X(K;0,aI)\geq p_u I$ for all $0<a<\infty$. Note that $X(K;0,aI)$ is depend continuously on $a$, thus one can obtain from a continuity argument \cite{horn2012matrix} that $X(K;0,0)\geq (p_u-\varepsilon) I$ for any $\varepsilon>0$. In this case, utilizing Lemma A.2 yields
\begin{equation}\begin{aligned}
X(k;0,0)>0,\ \forall k\geq K.
\end{aligned}\end{equation}
Consequently, $X_\vartheta>0$ for $\vartheta\in\mathbb{T}$.
\end{proof}

\begin{proof}[The proof of (79)]
Partition $X(k)$ conformally with $F(k)$ and denote $(i,j)$th sub-block of $X(k)$ as $X_{i,j}(k)\in\mathbb{R}^{n\times n}$.

Then, we will prove by an induction that $X_{i,i}(k)\leq P_{i}^{[\sigma_k]}$ for $i\in\mathbb{N}$ and $k\geq0$. At the initial moment, it is clear that $X_{i,i}(0)=0\leq P_{i}^{[\sigma_0]}$ for $i\in\mathbb{N}$. Suppose that $X_{i,i}(k-1)\leq P_{i}^{[\sigma_{k-1}]}$ for $i\in\mathbb{N}$. Then, for any $i\in\mathbb{N}$, one can derive from Lemma A.5 and the reformulation (45) that
\begin{equation}\begin{aligned}
&X_{i,i}(k)\\
=&\big(A(\sigma_{k-1})-K_{i}^{[\sigma_{k}]}C_i(\sigma_k)A(\sigma_{k-1})\big)\\
&\times\big(\sum_{j_1\in\mathbb{I}_i(\sigma_{k-1})}\sum_{j_2\in\mathbb{I}_i(\sigma_{k-1})}W_{i,j_1}^{[\sigma_{k-1}]}X_{j_1,j_2}(k-1)(W_{i,j_2}^{[\sigma_{k-1}]})^\mathrm{T}\big)\\
&\times\big(A(\sigma_{k-1})-K_{i}^{[\sigma_{k}]}C_i(\sigma_k)A(\sigma_{k-1})\big)^\mathrm{T}\\
&+\big(I-K_{i}^{[\sigma_k]}C_i(\sigma_k)\big)Q(\sigma_{k-1})\big(I-K_{i}^{[\sigma_k]}C_i(\sigma_k)\big)^\mathrm{T}\\
&+K_{i}^{[\sigma_k]}R_i(\sigma_k)(K_{i}^{[\sigma_k]})^\mathrm{T}\\
\leq&\big(A(\sigma_{k-1})-K_{i}^{[\sigma_{k}]}C_i(\sigma_k)A(\sigma_{k-1})\big)\hat{P}_i^{[\sigma_{k-1}]}\\
&\times\big(A(\sigma_{k-1})-K_{i}^{[\sigma_{k}]}C_i(\sigma_k)A(\sigma_{k-1})\big)^\mathrm{T}\\
&+\big(I-K_{i}^{[\sigma_k]}C_i(\sigma_k)\big)Q(\sigma_{k-1})\big(I-K_{i}^{[\sigma_k]}C_i(\sigma_k)\big)^\mathrm{T}\\
&+K_{i}^{[\sigma_k]}R_i(\sigma_k)(K_{i}^{[\sigma_k]})^\mathrm{T}\\
=&P_{i}^{[\sigma_{k}]},\nonumber
\end{aligned}\end{equation}
where we have used the equality $\bar{P}_i^{[\sigma_{k}]}=A(\sigma_{k-1})\hat{P}_i^{[\sigma_{k-1}]}A(\sigma_{k-1})^{\mathrm{T}}+Q(\sigma_{k-1})$ at the last equality. The proof is completed.
\end{proof}
\bibliographystyle{ieeetr}

\end{document}